\begin{document}

\title{Secure Polar Coding for the Two-Way \\Wiretap Channel}
\author{Mengfan~Zheng, Meixia~Tao, Wen~Chen and Cong~Ling
\thanks{M. Zheng, M. Tao and W. Chen are with the Department of
Electronic Engineering at Shanghai Jiao Tong University, Shanghai, China. Emails: \{zhengmengfan, mxtao, wenchen\}@sjtu.edu.cn. C. Ling is with the Department of Electrical and Electronic Engineering at Imperial College London, United Kingdom. Email: c.ling@imperial.ac.uk.

%The corresponding author is Meixia Tao.
}
}

\maketitle

\begin{abstract}
 We consider the problem of polar coding for secure communications over the two-way wiretap channel, where two legitimate users communicate with each other simultaneously while a passive eavesdropper overhears a combination of their exchanged signals. The legitimate users wish to design a cooperative jamming code such that the interference between their codewords can jam the eavesdropper. In this paper, we design a polar coded cooperative jamming scheme that achieves the whole secrecy rate region of the general two-way wiretap channel under the strong secrecy criterion. The chaining method is used to make proper alignment of polar indices. The randomness required to be shared between two legitimate users is treated as a limited resource and we show that its rate can be made negligible by increasing the blocklength and the number of chained blocks. For the special case when the eavesdropper channel is degraded with respect to the legitimate ones, a simplified scheme is proposed which can simultaneously ensure reliability and weak secrecy within a single transmission block. An example of the binary erasure channel case is given to demonstrate the performance of our scheme.
\end{abstract}

\begin{IEEEkeywords}
	Polar codes, two-way wiretap channel, coded cooperative jamming, physical-layer security, universal polar coding.
\end{IEEEkeywords}

\section{Introduction}
\label{S:Intro}
  Wyner proved in \cite{wyner1975wire} that it is possible to communicate both reliably and securely over a wiretap channel, on the premise that the eavesdropper channel is degraded with respect to the legitimate channel. Since then, numerous works have been done on showing the existence of secure coding schemes for different kinds of channels. However, few of these results provide guidance for designing a specific polynomial-time coding scheme, except for some special cases \cite{suresh2010strong,thangaraj2007applications,cheraghchi2012invertible}.
  Polar codes, proposed by Ar{\i}kan \cite{arikan2009channel}, have demonstrated capacity-achieving property in both source and channel coding \cite{arikan2009channel,arikan2010source,korada2009polar,korada2010lossy,honda2013asymmetric}. The principle that lies behind polar codes is that one can generate a series of extremal channels (noiseless or purely noisy) from repeated uses of a single-user channel. The structure of polar codes makes them also suitable for designing secrecy codes. Polar coding has been studied for wiretap channels  \cite{hof2010secrecy,andersson2010nested,mahdavifar2011achieving,koyluoglu2012polar,sasoglu2013strong,wei2016general,chou2016broad,gulcu2017wiretap}, fading wiretap channels \cite{Si2016fadingwiretap}, multiple access wiretap channels \cite{chou2016macwiretap,hajimomeni2016mawc}, and broadcast channels with confidential messages \cite{chou2016broad,gulcu2017wiretap}. It is shown that polar codes can achieve the secrecy capacity of the general wiretap channel under the strong secrecy criterion \cite{chou2016broad,gulcu2017wiretap}.

  The two-way wiretap channel models the situation when two legitimate users communicate with each other simultaneously in the presence of a passive eavesdropper. In this model, signals overheard by the eavesdropper are combinations of the exchanged signals between two legitimate users. This motivates the idea of leveraging interference between two users' transmitted codewords to degrade the eavesdropper channel, known as \textit{coded cooperative jamming}. This problem was first investigated in \cite{tekin2009mactw}, and the achievable rate region for the two-way wiretap channel was derived in \cite{pierrot2011twoway,Gamal2013secrecy}. A practical scheme based on low-density parity-check (LDPC) code was presented in \cite{pierrot2012ldpc}, which can guarantee weak secrecy for the special case of binary-input Gaussian two-way wiretap channel with equal-gained interference.
  
  Note that the eavesdropper sees a 2-user multiple access channel (MAC) in the two-way wiretap channel. Polar coding for MACs has been studied in \cite{sasoglu2013mac,abbe2012mmac,Nasser2016MAC,arikan2012sw,onayscmac,mahdavifar2016uniform}. There are two types of MAC polarization methods in literature, either synthesizing $N$ uses of the original MAC into $N$ new extremal MACs \cite{sasoglu2013mac,abbe2012mmac,Nasser2016MAC}, or $2N$ extremal point-to-point channels \cite{arikan2012sw,onayscmac,mahdavifar2016uniform}. In our scheme, we adopt Ar{\i}kan's monotone chain rule expansion method \cite{arikan2012sw} which belongs to the first type, as it can achieve all points on the dominant face of the achievable rate region of a MAC without time sharing, and has simple structure and low encoding/decoding complexity. 
  
  In this paper, we use polar codes to design a coded cooperative jamming scheme for the general two-way wiretap channel. The chaining method \cite{hassani2014universal} is used to deal with unaligned polar indices for the general channel cases. The main contributions of this paper include:
  \begin{itemize}
  	\item A polar coded cooperative jamming scheme for the general two-way wiretap channel is proposed, without any constraint on channel symmetry or degradation. Self interference of each user is considered in the code design, making our proposed scheme suitable for a large variety of channels rather than additive ones. For additive channels, one may assume that each user's self interference can be perfectly canceled. However, under a general setting, this assumption is inappropriate. In this paper, we treat self interference as side information of legitimate channels, which is involved in the polar code design, encoding and decoding.
  	\item Information theoretical analysis on reliability, secrecy and achievable rate region is performed. Instead of assuming channel prefixing can be done perfectly, we apply polar coding to do channel prefixing and show that the induced joint distribution of random variables involved in the coding scheme is asymptotically indistinguishable from the target one. The amount of randomness required to be shared between two legitimate users is considered as a limited resource, and we show that its rate can be made arbitrarily small by increasing the blocklength and chaining sufficient number of blocks in our scheme. By applying MAC polarization on the eavesdropper channel using different types of monotone chain rule expansions, we can achieve different secrecy rate pairs. We prove that our proposed scheme can achieve all points on the dominant face of the secrecy rate region of a two-way wiretap channel under the strong secrecy criterion.
  	\item A single-block scheme for the special case of degraded two-way wiretap channel is provided. In the case when the eavesdropper channel is degraded with respect to both legitimate channels, we show that with a slight modification, our proposed scheme can achieve the secrecy rate region under the weak secrecy criterion within a single transmission block.
  	\item An example of the binary erasure channel case is presented to evaluate the performance of our proposed scheme for different code lengths. The information leakage, block error rate and secrecy sum rate are estimated for code length $2^8$ to $2^{27}$. The results confirm the secrecy rate-achieving capability of our proposed scheme.
  \end{itemize}

  The rest of this paper is organized as follows. In Section \ref{S:II} we introduce the two-way wiretap channel model and state the problem we investigate. Section \ref{S:III} provides some necessary background on polarization and polar codes. In Section \ref{S:IV} we describe details of our proposed polar coding scheme and analyze its performance. Section \ref{S:V} shows a special case when weak secrecy can be obtained within a single transmission block. Section \ref{S:VI} gives an example of the binary erasure channel case. We conclude this paper in Section \ref{S:VII}.
  
  \textit{Notation:} $[N]$ denotes the index set of $\{1,2,...,N\}$. Random variables are denoted by capital letters $X$, $Y$, $U$, $V$, ... with values $x$, $y$, $u$, $v$, ... respectively. For a vector $\mathbf{y}=(y^1,y^2,...y^N)$, $y^{i:j}$ denotes its subvector $(y^i,...,y^j)$, and $y^{\mathcal{A}}$ ($\mathcal{A}\subset [N]$) denotes its subvector $\{y^i:i\in\mathcal{A}\}$. $\mathbf{F}^{\otimes n}$ denotes the $n^{th}$ Kronecker power of $\mathbf{F}$. $\mathbf{G}_N=\mathbf{B}_N \textbf{F}^{\otimes n}$ is the generator matrix of polar codes \cite{arikan2009channel}, where $N=2^n$ is the code length with $n$ being an arbitrary integer, $\mathbf{B}_N$ is a permutation matrix known as bit-reversal matrix, and $\textbf{F}=
  \begin{bmatrix}
  1 & 0 \\
  1 & 1
  \end{bmatrix}$.

\section{Problem Statement}
\label{S:II}
 \subsection{Channel Model}

  We consider the secure communication problem in the two-way wiretap channel as illustrated in Fig. \ref{fig:2way}. In this model, each of the two legitimate users, Alice and Bob, is equipped with a transmitter and a receiver. The channel is assumed to be full-duplex, and the two users communicate with each other simultaneously under the existence of a passive eavesdropper, Eve. Details of the communications are as follows:
  \begin{itemize}
  	\item Alice wants to send a message $M_1$ to Bob at rate $R_1$ over $N$ channel uses, she encodes $M_1$ into a codeword $\mathbf{X}_1$ and transmits it through the channel;
  	\item Bob wants to send a message $M_2$ to Alice at rate $R_2$ also over $N$ channel uses, he encodes $M_2$ into a codeword $\mathbf{X}_2$ and transmits it through the channel;
  	\item Alice observes $\mathbf{Y}_2$ from the channel and recovers $\hat{M}_2$;
  	\item Bob observes $\mathbf{Y}_1$ from the channel and recovers $\hat{M}_1$;
  	\item Eve observes $\mathbf{Y}_e$.
  \end{itemize}
\begin{figure}[tb]
	\centering
	\includegraphics[width=11cm]{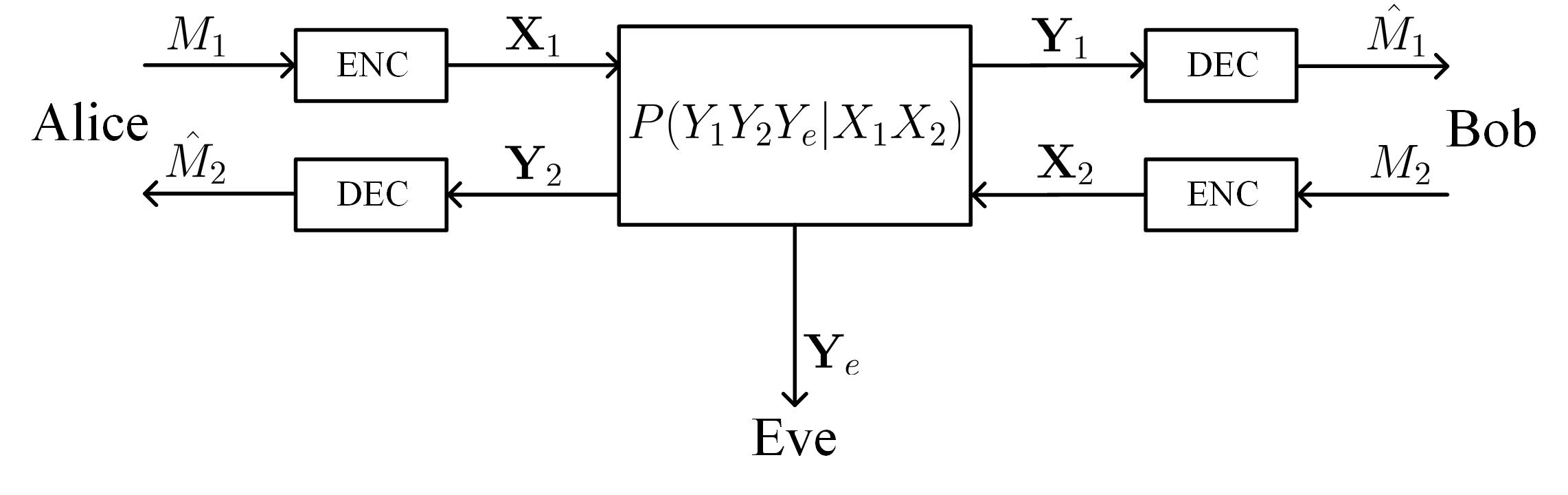}
	\caption{The two-way wiretap channel.} \label{fig:2way}
\end{figure}

  \newtheorem{definition}{Definition}
  \begin{definition}
  	 A memoryless two-way wiretap channel $\big{(}\mathcal{X}_1,\mathcal{X}_2,\mathcal{Y}_1,\mathcal{Y}_2,\mathcal{Y}_e,P_{Y_1Y_2Y_e|X_1X_2}\big{)}$ consists of two input alphabets $\mathcal{X}_1$ and $\mathcal{X}_2$, three output alphabets $\mathcal{Y}_1$, $\mathcal{Y}_2$ and $\mathcal{Y}_e$, and transition probability $P_{Y_1Y_2Y_e|X_1X_2}$ such that
  \begin{equation}
    \forall (x_1,x_2)\in \mathcal{X}_1\times \mathcal{X}_2,
    \sum_{y_1\in\mathcal{Y}_1}\sum_{y_2\in\mathcal{Y}_2}\sum_{y_e\in\mathcal{Y}_e}P_{Y_1Y_2Y_e|X_1X_2}(y_1,y_2,y_e|x_1,x_2)=1.
  \end{equation}
  \end{definition}

 \subsection{Coded Cooperative Jamming}

  The multiple access nature of the eavesdropper channel renders Alice and Bob an advantage over Eve, since the combination of their signals may have a detrimental effect on her. In the multi-user communication scenario, a natural approach to enhance security is to use \textit{cooperative jamming} \cite{tekin2009mactw}. While one user is transmitting secret messages, the other user transmits artificial noise to reduce the eavesdropper's signal-to-noise ratio.  In such a scheme, only one user can transmit useful information at a time. Another approach which overcomes this limitation is to utilize interference between codewords to jam the eavesdropper. In this case, both users transmit secret messages simultaneously, and their codewords are elaborately designed so that the interference between them can confuse the eavesdropper. This scheme is called coded cooperative jamming \cite{tekin2009mactw}. In this paper, we use polar codes to design such a code.

  The goal of designing a secure coding scheme for the two-way wiretap channel is to make sure Eve obtains no (or vanishing) information about $M_1$ and $M_2$ from $\mathbf{Y}_e$, while Alice and Bob can estimate their intended massages correctly. The performance of a coding scheme is assessed by its reliability and secrecy. For a coding scheme of blocklength $N$, reliability is measured by the probability of error
  \begin{equation}
  P_e(N)=\mathrm{Pr}\big{\{}(\hat{M}_1,\hat{M}_2) \neq (M_1,M_2)\big{\}}.
  \end{equation}
  Secrecy can be measured by the information leakage
  \begin{equation}
  %\label{SSC}
  L(N)=I(\mathbf{Y}_e;M_1,M_2),
  \end{equation}
  or the information leakage rate
  \begin{equation}
  L_R(N)=\frac{1}{N}L(N).
  \end{equation}
  The objective of a secure coding scheme is then:
  \begin{equation}
  \label{RC}
  \lim\limits_{N \to \infty}{P_e(N)}=0;
  \end{equation}
  \begin{equation}
  \label{SSC}
  \lim\limits_{N \to \infty}{L(N)}=0 \text{ (strong secrecy), or}
  \end{equation}
  \begin{equation}
  	\label{WSC}
  	\lim\limits_{N \to \infty}{L_R(N)}=0 \text{ (weak secrecy)}.
  \end{equation}

  Criterion (\ref{WSC}) is called weak secrecy because it does not guarantee vanishing information leakage. For some strict situations this is unacceptable. In Section \ref{S:IV} we will introduce the general strong secrecy scheme. In Section \ref{S:V} we discuss a special case when weak secrecy can be achieved within a single transmission block.

  \subsection{Achievable Rate Region}
  A rate pair $(R_1,R_2)$ is said to be achievable for a two-way wiretap channel under the strong/weak secrecy criterion if there exists a coding scheme such that (\ref{RC}) and (\ref{SSC})/(\ref{WSC}) can be satisfied. The achievable rate region of this channel is the closure of all achievable rate pairs. For a two-way wiretap channel with transition probability $P_{Y_1Y_2Y_e|X_1X_2}(y_1,y_2,y_e|x_1,x_2)$, the secrecy rate region under the strong (as well as weak) secrecy criterion is \cite{pierrot2011twoway}
  \begin{equation}
  \label{SSR}
  \mathcal{R}_S(P_{Y_1Y_2Y_e|X_1X_2}) = \bigcup_{P\in\mathcal{P}}
  \left\lbrace
  \begin{matrix}
  	\left(
  	\begin{array}{ccc}
  		R_1\\
  		R_2
  	\end{array}
  	\right)&
  	\left|
  	\begin{array}{ccc}
  		\begin{array}{ccc}
  			R_1 \leq I(Y_1;C_1|X_2)-I(C_1;Y_e)\\
  			R_2 \leq I(Y_2;C_2|X_1)-I(C_2;Y_e)\\
  			R_1+R_2 \leq I(Y_1;C_1|X_2)+I(Y_2;C_2|X_1)\\
  			~~~~~~~~~~~~~~~~~~-I(C_1,C_2;Y_e)
  		\end{array}
  	\end{array}\right.
  \end{matrix}
  \right\rbrace,
  \end{equation}
  where
  \begin{equation*}
  \mathcal{P}=\{P_{X_1X_2C_1C_2Y_1Y_2Y_e} \text{ factorizing as: } P_{Y_1Y_2Y_e|X_1X_2}P_{X_1|C_1}P_{C_1}P_{X_2|C_2}P_{C_2}\}.
  \end{equation*}

\section{Review of Polar Coding}
\label{S:III}

   \subsection{Polar Coding for Asymmetric Channels}
   \label{S:IIIA}
   In this subsection we review the polar coding scheme proposed in \cite{honda2013asymmetric} and simplified in \cite{chou2016broad,gad2016asymmetric} for asymmetric channels. Consider $N$ independent uses of a binary-input discrete memoryless channel (B-DMC) $P_{Y|X}(y|x)$, where $X$ is binary with arbitrary distribution and $Y$ is defined on an arbitrary countable alphabet. Let $U^{1:N}=X^{1:N}\mathbf{G}_N$. For $\delta_N=2^{-N^\beta}$ with $\beta \in (0,1/2)$, define the following polarized sets:
   \begin{align}
   \mathcal{H}^{(N)}_X&=\{i\in [N]:Z(U^i|U^{1:i-1})\geq 1-\delta_N\},\label{HX}\\
   \mathcal{L}^{(N)}_X&=\{i\in [N]:Z(U^i|U^{1:i-1})\leq \delta_N\},\label{LX}\\
   \mathcal{H}^{(N)}_{X|Y}&=\{i\in [N]:Z(U^i|Y^{1:N},U^{1:i-1})\geq 1-\delta_N\},\label{HXY}\\
   \mathcal{L}^{(N)}_{X|Y}&=\{i\in [N]:Z(U^i|Y^{1:N},U^{1:i-1})\leq \delta_N\},\label{LXY}
   \end{align}
   where $Z(X|Y)$ is the Bhattacharyya parameter of a random variable pair $(X,Y)$, defined as
   	\begin{equation}
   	Z(X|Y)=2\sum_{y\in \mathcal{Y}} P_Y(y)\sqrt{P_{X|Y}(0|y)P_{X|Y}(1|y)}.
   	\end{equation}
   It is shown that \cite{arikan2010source}
   \begin{equation}
   \begin{aligned}
   \lim_{N\rightarrow \infty}\frac{1}{N}|\mathcal{H}^{(N)}_X|&=H(X),~~~~~~
   \lim_{N\rightarrow \infty}\frac{1}{N}|\mathcal{L}^{(N)}_X|=1-H(X),\\
   \lim_{N\rightarrow \infty}\frac{1}{N}|\mathcal{H}^{(N)}_{X|Y}|&=H(X|Y),~~
   \lim_{N\rightarrow \infty}\frac{1}{N}|\mathcal{L}^{(N)}_{X|Y}|=1-H(X|Y).
   \end{aligned}
   \end{equation}

   To construct a polar code for $W$, partition indices of $U^{1:N}$ into the following sets:
   \begin{equation}
   \begin{aligned}
   \mathcal{I}&\triangleq \mathcal{H}_X^{(N)}\cap \mathcal{L}_{X|Y}^{(N)}, \\
   \mathcal{F}&\triangleq \mathcal{H}_X^{(N)}\cap (\mathcal{L}_{X|Y}^{(N)})^C, \\
   \mathcal{D}&\triangleq (\mathcal{H}_X^{(N)})^C.
   \end{aligned}
   \end{equation}
   Since $\{u^i\}_{i\in \mathcal{I}}$ are uniformly distributed and can be reliably decoded, they will be filled with uniformly distributed information bits. For $\{u^i\}_{i\in \mathcal{F}\cup \mathcal{D}}$, reference \cite{honda2013asymmetric} suggests to assign them by random mappings $\lambda_{\mathcal{I}^C}\triangleq \{\lambda_i\}_{i\in\mathcal{I}^C}$ that sample distribution $P_{U^i|U^{1:i-1}}$, which are shared between the encoder and the decoder. However, Exchanging the shared randomness may heavily increase the encoder's overhead since the non-information bits usually form a large portion of the uncoded bits. A simplified scheme which only requires a vanishing rate of shared randomness was independently proposed in \cite{chou2016broad} and \cite{gad2016asymmetric}, which is summarized as follows.
   \begin{itemize}
   	\item $\{u^i\}_{i\in \mathcal{I}}$ carry uniformly distributed information bits,
   	\item $\{u^i\}_{i\in \mathcal{F}}$ are filled with uniformly distributed frozen bits (shared between the encoder and the decoder),
   	\item $\{u^i\}_{i\in \mathcal{D}}$ are assigned by random mappings:
   	\begin{equation*}
   	\label{randommapping}
   	u^i=\begin{cases}
   	0 ~~\text{  w.p. } P_{U^i|U^{1:i-1}}(0|u^{1:i-1}),\\
   	1 ~~\text{  w.p. } P_{U^i|U^{1:i-1}}(1|u^{1:i-1}).
   	\end{cases}
   	\end{equation*}
   	\item Codeword $x^{1:N}=u^{1:N}\mathbf{G}_N$ is transmitted to the receiver.
   	\item $\{u^i\}_{i\in (\mathcal{H}_X^{(N)})^C\cap (\mathcal{L}_{X|Y}^{(N)})^C}$ is separately transmitted to the receiver with some reliable error-correcting code.
   \end{itemize}

   It is shown in \cite{chou2016broad,gad2016asymmetric} that the rate of the shared almost deterministic bits in $(\mathcal{H}_X^{(N)})^C\cap (\mathcal{L}_{X|Y}^{(N)})^C$ vanishes as $N$ goes large. Having received $y^{1:N}$ and recovered the shared bits, the receiver decodes $u^{1:N}$ with a successive cancellation (SC) decoder:
   \begin{equation*}
   \bar{u}^{i}=
   \begin{cases}
   u^i,&\text{if } i\in (\mathcal{L}_{X|Y}^{(N)})^C\\
   \arg\max_{u\in\{0,1\}}P_{U^{i}|Y^{1:N}U^{1:i-1}}(u|y^{1:N},u^{1:i-1}),&\text{if } i\in \mathcal{L}_{X|Y}^{(N)}
   \end{cases}.
   \end{equation*}
   
   The rate this scheme, $R=\frac{1}{N}|\mathcal{I}|$, satisfies
   \begin{equation}
   \lim_{N\rightarrow \infty}R=I(X;Y),
   \end{equation}
   and the block error probability can be upper bounded by
   \begin{equation}
   \label{SC-EP}
   P_e \leq\sum_{i\in\mathcal{L}_{X|Y}^{(N)}}{Z(U^i|Y^{1:N},U^{1:i-1})}=O(N2^{-N^\beta}).
   \end{equation}

   \subsection{Polar Coding for Multiple Access Channels}
   \label{Sec-MAC}
   In this subsection we recap the monotone chain rule expansion based MAC polarization method introduced in \cite{arikan2012sw} and generalized to asymmetric channels in \cite{zheng2016polarIC}.  The achievable rate region of a binary-input discrete memoryless 2-user MAC $P_{Y|X_1X_2}(y|x_1,x_2)$  is given by \cite{cover2012informtaion}
   \begin{equation}
   \mathcal{R}(P_{Y|X_1X_2})=\left\lbrace
   \begin{matrix}
   \left(
   \begin{array}{ccc}
   R_1\\
   R_2
   \end{array}
   \right)&
   \left|
   \begin{array}{ccc}
   \begin{array}{ccc}
   0 \leq R_1 \leq I(X_1;Y|X_2)\\
   0 \leq R_2 \leq I(X_2;Y|X_1)\\
   R_1+R_2 \leq I(X_1,X_2;Y)
   \end{array}
   \end{array}\right.
   \end{matrix}
   \right\rbrace .
   \end{equation}
   
   Define
   \begin{equation}
   \label{UX}
   U_1^{1:N}=X_1^{1:N}\mathbf{G}_N,~~U_2^{1:N}=X_2^{1:N}\mathbf{G}_N,
   \end{equation}
   and let $S^{1:2N}=(S^1,...,S^{2N})$ be a permutation of $U_1^{1:N}U_2^{1:N}$ such that it preserves the relative order of elements of both $U_1^{1:N}$ and $U_2^{1:N}$, called a \textit{monotone chain rule expansion}. For $i\in [2N]$, let $b_i=0$ represent that $S^i\in U_1^{1:N}$, and $b_i=1$ represent that $S^i\in U_2^{1:N}$. Then a monotone chain rule expansion can be represented by a string $\mathbf{b}_{2N}=b_1b_2...b_{2N}$, called the \textit{path} of the expansion. The mutual information between the receiver and two users can be expanded as
   \begin{align*}
   \label{MCRE}
   I(Y^{1:N};U_1^{1:N},U_2^{1:N})&=H(U_1^{1:N},U_2^{1:N})-H(U_1^{1:N},U_2^{1:N}|Y^{1:N})\\
   &=NH(X_1)+NH(X_2)-\sum_{i=1}^{2N}H(S^i|Y^{1:N},S^{1:i-1}),
   \end{align*}
   and the rate of user $j$ ($j=1,2$) is
   \begin{equation}
   R_{U_j}=H(X_j)-\frac{1}{N}\sum_{i\in \mathcal{S}_{U_j}}H(S^i|Y^{1:N},S^{1:i-1}),
   \end{equation}
   where $\mathcal{S}_{U_j}\triangleq \{ i\in [2N]:S^i\in U_j^{1:N}\}$. $(R_{U_1},R_{U_2})$ is a point on the dominant face of $\mathcal{R}(P_{Y|X_1X_2})$. It is shown that  arbitrary points on the dominant face can be achieved with expansions of type $0^i1^N0^{N-i}$ ($0\leq i\leq N$) given sufficiently large $N$ \cite{arikan2012sw}. It is also shown that $H(S^i|Y^{1:N},S^{1:i-1})$ ($i\in [2N]$) polarizes to 0 or 1 as $N$ goes to infinity.
   
   Having selected a specific expansion for a target rate pair, we still need enough code length to polarize the MAC sufficiently. In order to do so, we need to scale the path. For any integer $l=2^m$, let $l\mathbf{b}_{2N}$ denote $$\underbrace{b_1\cdots b_1}_l \underbrace{b_2\cdots b_2}_l\cdots\cdots \underbrace{b_{2N}\cdots b_{2N}}_l,$$ which is a monotone chain rule for $U_1^{1:lN}U_2^{1:lN}$. It is shown in \cite{arikan2012sw} that $\mathbf{b}_{2N}$ and $l\mathbf{b}_{2N}$ have the same rate pair. With this result, we can construct a polar code for the 2-user MAC using point-to-point polar codes. For $j=1,2$, let $f_j(i):[N]\rightarrow \mathcal{S}_{U_j}$ be the mapping from indices of $U_j^{1:N}$ to those of $S^{\mathcal{S}_{U_j}}$. Define
   \begin{align}
   \mathcal{H}^{(N)}_{S_{U_j}}&\triangleq \{i\in [N]:Z(S^{f_j(i)}|S^{1:f_j(i)-1})\geq 1-\delta_N\},\\
   \mathcal{L}^{(N)}_{S_{U_j}|Y}&\triangleq \{i\in [N]:Z(S^{f_j(i)}|Y^{1:N},S^{1:f_j(i)-1})\leq \delta_N\},
   \end{align}
   for $\delta_N=2^{-N^\beta}$ with $\beta \in (0,1/2)$. Since $X_1$ and $X_2$ are independent, we have
   \begin{equation}
   \mathcal{H}^{(N)}_{S_{U_j}}=\mathcal{H}^{(N)}_{X_j}\triangleq \{i\in [N] :Z(U_j^i|U_j^{1:i-1})\geq 1-\delta_N\}.
   \end{equation}
   Then we can partition indices of $U_j^{1:N}$ into
   \begin{equation}
   \begin{aligned}
   \mathcal{I}_j&\triangleq \mathcal{H}^{(N)}_{S_{U_j}}\cap \mathcal{L}^{(N)}_{{S_{U_j}}|Y}, \\
   \mathcal{F}_j&\triangleq \mathcal{H}^{(N)}_{S_{U_j}}\cap (\mathcal{L}^{(N)}_{{S_{U_j}}|Y})^C, \\
   \mathcal{D}_j&\triangleq (\mathcal{H}^{(N)}_{S_{U_j}})^C,
   \end{aligned}
   \end{equation}
   and then apply the polar coding scheme introduced in the previous subsection. The receiver jointly decodes $U_1^{1:N}$ and $U_2^{1:N}$ with a successive cancellation decoder according to the permutation used.
   
   \newtheorem{proposition}{Proposition}
   \begin{proposition}[\cite{arikan2012sw}]
   	\label{proposition:MAC}
   	Let $P_{Y|X_1X_2}(y|x_1,x_2)$ be the transition probability of a binary-input memoryless 2-user MAC. Consider the transformation defined in (\ref{UX}). Let $N_0=2^{n_0}$ for some $n_0\geq 1$ and fix a path $\mathbf{b}_{2N_0}$ for $U_1^{1:N_0}U_2^{1:N_0}$. The rate pair for $\mathbf{b}_{2N_0}$ is denoted by $(R_{U_1},R_{U_2})$. Let $N=2^lN_0$ for $l\geq 1$ and let $S^{1:2N}$ be the expansion represented by $2^l\mathbf{b}_{2N_0}$. Then, for any given $\delta>0$, as $l$ goes to infinity, we have
   	\begin{equation}
   	\begin{aligned}
   	&\frac{1}{2N}\big{|}\{ i\in [2N]:\delta <Z(S^i|Y^{1:N},S^{1:i-1})<1-\delta \}\big{|}\rightarrow 0,\\
   	&~~~~~~~~\frac{|\mathcal{I}_1|}{N}\rightarrow R_{U_1} \text{ and } \frac{|\mathcal{I}_2|}{N}\rightarrow R_{U_2}. \label{MAC:rate}
   	\end{aligned}
   	\end{equation}
   \end{proposition}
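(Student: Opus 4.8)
The plan is to split the statement into two parts: the polarization of the Bhattacharyya parameters (the first limit) and the convergence of the information-set sizes to the base rate pair (the second pair of limits). For the rate convergence I would lean on the rate-invariance under scaling already recorded above (that $\mathbf{b}_{2N_0}$ and $2^l\mathbf{b}_{2N_0}$ induce the same pair $(R_{U_1},R_{U_2})$) together with the source-polarization estimate $\frac{1}{N}|\mathcal{H}^{(N)}_{S_{U_j}}|\to H(X_j)$. The bulk of the work is the polarization statement, which I expect to be the main obstacle.

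First I would establish polarization. Because $\mathbf{G}_N=\mathbf{B}_N\mathbf{F}^{\otimes n}$ has Kronecker structure and $2^l\mathbf{b}_{2N_0}$ is itself a legitimate monotone chain rule expansion for $U_1^{1:N}U_2^{1:N}$, the sequence $\{H(S^i|Y^{1:N},S^{1:i-1})\}_{i\in[2N]}$ can be organized into the standard single-user polar recursion: refining the base path by the factor $2^l$ amounts to applying $l$ levels of the basic $\mathbf{F}$-transform, each of which splits one synthesized coordinate into a ``minus'' and a ``plus'' coordinate whose conditional entropies average to the parent's. I would then invoke the general polarization theorem: the entropy process obtained by following a uniformly random branch of this transform tree is a bounded martingale (entropy is conserved on average by the basic transform), hence converges almost surely, and its limit is supported on $\{0,1\}$ because the basic transform strictly separates any pair of entropies lying in the open interval $(0,1)$. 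Translating the almost-sure convergence into a counting statement over the $2N$ coordinates yields
\begin{equation*}
\frac{1}{2N}\big|\{i\in[2N]:\delta<Z(S^i|Y^{1:N},S^{1:i-1})<1-\delta\}\big|\to 0,
\end{equation*}
using that the conditional Bhattacharyya parameter and the conditional entropy are simultaneously close to $0$ and simultaneously close to $1$.

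With polarization in hand, I would prove the rate convergence by a counting argument on $\mathcal{S}_{U_j}$. To each index $k\in[N]$ attach its source parameter $Z(S^{f_j(k)}|S^{1:f_j(k)-1})$ and its channel parameter $Z(S^{f_j(k)}|Y^{1:N},S^{1:f_j(k)-1})$; both polarize, and since conditioning on $Y^{1:N}$ can only decrease the Bhattacharyya parameter, a channel parameter near $1$ forces the source parameter near $1$, so $\mathcal{H}^{(N)}_{S_{U_j}}$ contains (up to the vanishing unpolarized fraction) every index whose channel parameter is near $1$. Summing conditional entropies then gives $\frac{1}{N}|\mathcal{H}^{(N)}_{S_{U_j}}|\to H(X_j)$ and $\frac{1}{N}|\{k:\text{channel parameter}\approx 1\}|\to H(X_j)-R_{U_j}$, the latter directly from the definition $R_{U_j}=H(X_j)-\frac{1}{N}\sum_{i\in\mathcal{S}_{U_j}}H(S^i|Y^{1:N},S^{1:i-1})$ and the scaling-invariance of the rate. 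Since $\mathcal{I}_j=\mathcal{H}^{(N)}_{S_{U_j}}\cap\mathcal{L}^{(N)}_{S_{U_j}|Y}$ is exactly the set of indices with source parameter near $1$ but channel parameter near $0$, subtracting the two counts gives $\frac{1}{N}|\mathcal{I}_j|\to H(X_j)-(H(X_j)-R_{U_j})=R_{U_j}$, as required, and the same holds for $j=2$.

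The hard part is the polarization step, and specifically verifying that the ordering $2^l\mathbf{b}_{2N_0}$ genuinely reduces to the single-user recursion to which the general polarization theorem applies. The interleaving of the two users' coordinates is fixed by the base path and only a constant number (at most $2N_0$) of ``type switches'' occur; the care required is to confirm that these boundary coordinates contribute a vanishing fraction and that, away from them, successive refinements act as genuine $\mathbf{F}$-transforms on well-defined synthesized channels, so that the martingale and fixed-point machinery is legitimate. This bookkeeping is precisely the content established in \cite{arikan2012sw}, on which I would rely for the detailed recursion.
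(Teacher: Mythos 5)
The paper itself contains no proof of this proposition: it is imported verbatim from \cite{arikan2012sw} (with the asymmetric-input extension recapped from the paper's Section \ref{Sec-MAC}), which is exactly the reference your final paragraph defers to, so your proposal follows essentially the same route as the paper. Your reconstruction of that reference's argument --- entropy conservation and martingale convergence under successive refinements of the path, the equivalence of Bhattacharyya and entropy extremes, rate invariance of $2^l\mathbf{b}_{2N_0}$, and the counting step giving $|\mathcal{I}_j|/N \to H(X_j)-(H(X_j)-R_{U_j})=R_{U_j}$ --- is accurate; the one imprecision is your worry about ``type switches,'' which is a non-issue in Ar{\i}kan's construction, since each refinement pairs coordinate $i$ of two independent copies of the entire block (necessarily of the same type), so every coordinate, boundary or not, undergoes a genuine basic transform and no vanishing-fraction bookkeeping is needed there.
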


\section{Polar Coding for the Two-Way Wiretap Channel}
\label{S:IV}
\subsection{The Proposed Scheme}
\label{S:IV-A}
\subsubsection{Polarization of Legitimate Channels}
\label{SectionIV1}
For a given $P_{X_1X_2C_1C_2Y_1Y_2Y_e}\in \mathcal{P}$, define Bob's effective channel as
\begin{equation*}
W_1(y_1|c_1,x_2)\triangleq \sum_{x_1}P_{Y_1|X_1X_2}(y_1|x_1,x_2)P_{X_1|C_1}(x_1|c_1),
\end{equation*}
and similarly define Alice's effective channel $W_2(y_2|x_1,c_2)$.

Since each legitimate user knows its own transmitted signal, he/she will treat it as side information while decoding the other user's message. Let $U_j^{1:N}=C_j^{1:N}\mathbf{G}_N$ and $V_j^{1:N}=X_j^{1:N}\mathbf{G}_N$ for $j=1,2$. For $\delta_N=2^{-N^\beta}$ with $\beta \in (0,1/2)$, define
\begin{align}
\mathcal{H}^{(N)}_{C_1|X_2}&\triangleq\{i\in [N]:Z(U_1^i|X_2^{1:N},U_1^{1:i-1})\geq 1-\delta_N\},\nonumber\\
\mathcal{L}^{(N)}_{C_1|Y_1X_2}&\triangleq\{i\in [N]:Z(U_1^i|Y_1^{1:N},X_2^{1:N},U_1^{1:i-1})\leq \delta_N\}, \label{G1}\\
\mathcal{H}^{(N)}_{C_2|X_1}&\triangleq\{i\in [N]:Z(U_2^i|X_1^{1:N},U_2^{1:i-1})\geq 1-\delta_N\},\nonumber\\
\mathcal{L}^{(N)}_{C_2|Y_2X_1}&\triangleq\{i\in [N]:Z(U_2^i|Y_2^{1:N},X_1^{1:N},U_2^{1:i-1})\leq \delta_N\}. \label{G2}
\end{align}
Since two users' messages are independent, we have
\begin{align*}
\mathcal{H}^{(N)}_{C_1|X_2}&=\mathcal{H}^{(N)}_{C_1}\triangleq\{i\in [N]:Z(U_1^i|U_1^{1:i-1})\geq 1-\delta_N\},\\
\mathcal{H}^{(N)}_{C_2|X_1}&=\mathcal{H}^{(N)}_{C_2}\triangleq\{i\in [N]:Z(U_2^i|U_2^{1:i-1})\geq 1-\delta_N\}.
\end{align*}

For conventional two-way communication without secrecy requirement, the information bit sets are defined as
\begin{equation*}
	\mathcal{G}_1=\mathcal{H}^{(N)}_{C_1|X_2}\cap \mathcal{L}^{(N)}_{C_1|Y_1X_2},~~
	\mathcal{G}_2=\mathcal{H}^{(N)}_{C_2|X_1}\cap \mathcal{L}^{(N)}_{C_2|Y_2X_1}.
\end{equation*}
Since such two polar codes can be seen as MAC polar codes designed for corner points, from (\ref{MAC:rate}) we have
\begin{equation}
\label{reliablerate}
\lim\limits_{N\rightarrow \infty}\frac{1}{N}|\mathcal{G}_1|=I(Y_1;C_1|X_2),~~
\lim\limits_{N\rightarrow \infty}\frac{1}{N}|\mathcal{G}_2|=I(Y_2;C_2|X_1).
\end{equation}

To generate the final codeword, one can transmit $C_j^{1:N}$ ($j=1,2$) through a virtual channel with transition probability $P_{X_j|C_j}(x_j|c_j)$, known as channel prefixing. In practice, channel prefixing may not be done perfectly. In this paper, we consider $X_j$ and $C_j$ ($j=1,2$) as two correlated sources and use polar coding to do channel prefixing. Define
\begin{equation}
\mathcal{H}^{(N)}_{X_j|C_j}\triangleq\{i\in [N]:Z(V_j^i|C_j^{1:N},V_j^{1:i-1})\geq 1-\delta_N\}.
\end{equation}
Once $C_j^{1:N}$ is determined, $X_j^{1:N}$ can be obtained as follows:
\begin{itemize}
	\item $\{v_j^i\}_{i\in \mathcal{H}^{(N)}_{X_j|C_j}}$ are filled with uniformly distributed random bits,
	\item $\{v_j^i\}_{i\in (\mathcal{H}^{(N)}_{X_j|C_j})^C}$ are assigned by random mappings:
	\begin{equation*}
	v_j^i=\begin{cases}
	0 ~~\text{  w.p. } P_{V_j^i|C_j^{1:N}V_j^{1:i-1}}(0|c_j^{1:N},v_j^{1:i-1}),\\
	1 ~~\text{  w.p. } P_{V_j^i|C_j^{1:N}V_j^{1:i-1}}(1|c_j^{1:N},v_j^{1:i-1}).
	\end{cases}
	\end{equation*} 
	\item Compute $x_j^{1:N}=v_j^{1:N}\mathbf{G}_N$. 
\end{itemize}

\subsubsection{Polarization of the Eavesdropper Channel}
Eve's effective channel is defined as
\begin{equation*}
W_e(y_e|c_1,c_2)\triangleq \sum_{x_1}\sum_{x_2}P_{Y_e|X_1X_2}(y_e|x_1,x_2)P_{X_1|C_1}(x_1|c_1)P_{X_2|C_2}(x_2|c_2),
\end{equation*}
the achievable rate region of which is given by
\begin{equation}
\label{eveeffect}
\mathcal{R}(W_e) = \left\lbrace
\begin{matrix}
\left(
\begin{array}{ccc}
R_{1}\\
R_{2}
\end{array}
\right)&
\left|
\begin{array}{ccc}
\begin{array}{ccc}
0 \leq R_1 \leq I(C_1;Y_e)\\
0 \leq R_2 \leq I(C_2;Y_e)\\
R_1+R_2 \leq I(C_1,C_2;Y_e)
\end{array}
\end{array}\right.
\end{matrix}
\right\rbrace.
\end{equation}

For an arbitrary point $\mathbf{P}_S=(R_{S1},R_{S2})$ on the dominant face of $\mathcal{R}_S(P_{Y_1Y_2Y_e|X_1X_2})$, let
\begin{equation*}
R_{e1}=I(Y_1;C_1|X_2)-R_{S1},~~
R_{e2}=I(Y_2;C_2|X_1)-R_{S2}.
\end{equation*}
Obviously $\mathbf{P}_E\triangleq(R_{e1},R_{e2})$ is on the dominant face of $\mathcal{R}(W_e)$. Let $S^{2N}$ be the permutation of $U_1^{1:N}U_2^{1:N}$ that achieves $\mathbf{P}_E$ in $W_e$. For $j=1,2$, define $\mathcal{S}_{U_j}\triangleq \{i\in[2N]:S^i\in U_j^{1:N}\}$ and let $f_j(i):[N]\rightarrow \mathcal{S}_{U_j}$ be the mapping from indices of $U_j^{1:N}$ to those of $S^{\mathcal{S}_{U_j}}$. Then we can define the following polarized sets from Eve's point of view:
\begin{align}
\mathcal{H}^{(N)}_{S_{U_j}}&\triangleq \big{\{}i\in [N]:Z(S^{f_j(i)}|S^{1:f_j(i)-1})\geq 1-\delta_N \big{\}},\nonumber\\
\mathcal{H}^{(N)}_{S_{U_j}|Y_e}&\triangleq \big{\{}i\in [N]:Z(S^{f_j(i)}|Y_e^{1:N}, S^{1:f_j(i)-1})\geq 1-\delta_N \big{\}},\label{Be}\\
\mathcal{L}^{(N)}_{S_{U_j}|Y_e}&\triangleq \big{\{}i\in [N]:Z(S^{f_j(i)}|Y_e^{1:N}, S^{1:f_j(i)-1})\leq \delta_N \big{\}}. \label{Ge}
\end{align}
Since two users' messages are independent from each other, we have
\begin{equation*}
\mathcal{H}^{(N)}_{S_{U_j}}=\mathcal{H}^{(N)}_{C_j}.
\end{equation*}

Note that we do not assume how Eve decodes by using a specific permutation $S^{2N}$. The choice of $S^{2N}$ only determines the secrecy rate allocation between two users. We will show in the next subsection that our scheme satisfies the strong secrecy criterion whichever permutation we use.

\subsubsection{Polar Coding for the Two-Way Wiretap Channel}
\label{S:PCTW}
\begin{figure}[tb]
	\centering
	\includegraphics[width=8cm]{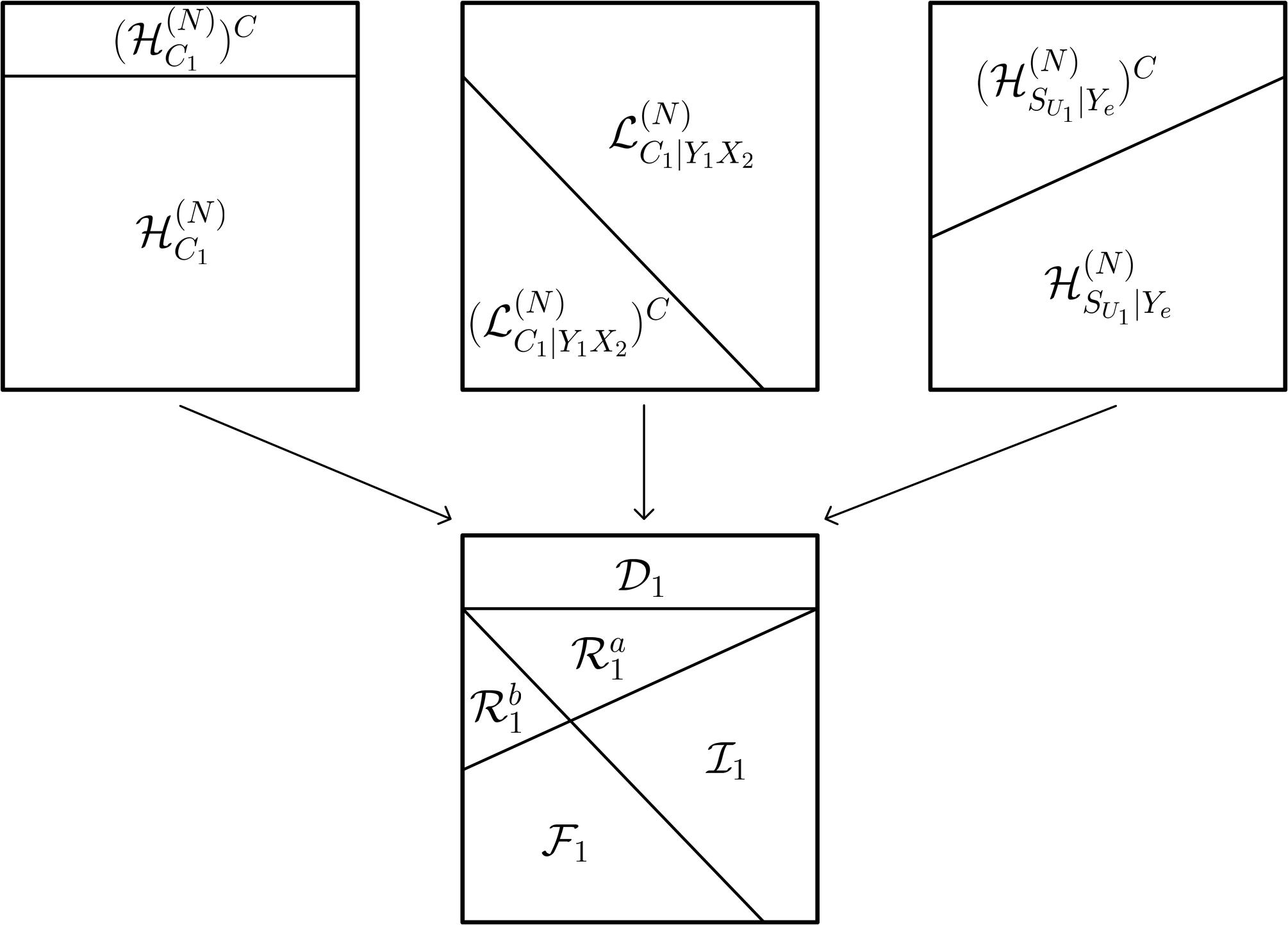}
	\caption{Code construction for Alice.} \label{fig:codecons}
\end{figure}
Define the following index sets for Alice,
\begin{equation}
\begin{aligned}
\label{PART1}
\mathcal{I}_1&=\mathcal{H}^{(N)}_{C_1} \cap \mathcal{L}^{(N)}_{C_1|Y_1X_2} \cap \mathcal{H}^{(N)}_{S_{U_1}|Y_e},\\
\mathcal{F}_1&=\mathcal{H}^{(N)}_{C_1} \cap \big{(}\mathcal{L}^{(N)}_{C_1|Y_1X_2}\big{)}^C \cap \mathcal{H}^{(N)}_{S_{U_1}|Y_e},\\
\mathcal{R}_1^a&=\mathcal{H}^{(N)}_{C_1} \cap \mathcal{L}^{(N)}_{C_1|Y_1X_2} \cap \big{(}\mathcal{H}^{(N)}_{S_{U_1}|Y_e}\big{)}^C,\\
\mathcal{R}_1^b&=\mathcal{H}^{(N)}_{C_1} \cap \big{(}\mathcal{L}^{(N)}_{C_1|Y_1X_2}\big{)}^C \cap \big{(}\mathcal{H}^{(N)}_{S_{U_1}|Y_e}\big{)}^C,\\
\mathcal{D}_1&=\big{(}\mathcal{H}^{(N)}_{C_1}\big{)}^C,
\end{aligned}
\end{equation}
as illustrated in Fig. \ref{fig:codecons}, and similarly define $\mathcal{I}_2$, $\mathcal{F}_2$, $\mathcal{R}_2^a$, $\mathcal{R}_2^b$ and $\mathcal{D}_2$ for Bob. We take $U_1^{1:N}$ as an example to show the code design. Since $\{u_1^i\}_{i\in \mathcal{I}_1}$ are reliable to Bob, but very unreliable to Eve, they can carry secret information. $\{u_1^i\}_{i\in \mathcal{F}_1}$ are unreliable to both of them, thus should be filled with frozen bits. $\{u_1^i\}_{i\in \mathcal{R}_1^a}$ are reliable to both of them, therefore should not carry any secret information. Instead, they will be filled with uniformly distributed random bits. $\{u_1^i\}_{i\in \mathcal{R}_1^b}$ are reliable to Eve but unreliable to the Bob, which poses a problem to the code design. They should serve as frozen bits for Bob while being secured from Eve. A commonly adopted method to solve this problem is the chaining method \cite{hassani2014universal}, which will be described in detail below. $\{u^i\}_{i\in \mathcal{D}_1}$ are the almost deterministic bits to be generated by random mappings.

The key point of the assignment for bits in $\mathcal{R}_1^b$ and $\mathcal{R}_2^b$ is to find a way to ensure their randomness with respect to the eavesdropper and definiteness with respect to the legitimate users simultaneously. Suppose $|\mathcal{I}_1|>|\mathcal{R}_1^b|$ and $|\mathcal{I}_2|>|\mathcal{R}_2^b|$ (corresponding to the positive secrecy rate case). Choose a subset $\mathcal{I}_1^b$ of $\mathcal{I}_1$ such that $|\mathcal{I}_1^b|=|\mathcal{R}_1^b|$, and a subset $\mathcal{I}_2^b$ of $\mathcal{I}_2$ such that $|\mathcal{I}_2^b|=|\mathcal{R}_2^b|$. Denote $\mathcal{I}_1^a=\mathcal{I}_1\setminus \mathcal{I}_1^b$ and $\mathcal{I}_2^a=\mathcal{I}_2\setminus \mathcal{I}_2^b$. Consider a series of $m$ transmission blocks. $\mathcal{I}_1^b$ (resp. $\mathcal{I}_2^b$) in the $k$th ($1\leq k<m$) block is chained to $\mathcal{R}_1^b$ (resp. $\mathcal{R}_2^b$) in the $(k+1)$th block in the sense that bits in them share the same value. In the $k$th block, Alice and Bob can decode $\mathcal{I}_1^b$ and $\mathcal{I}_2^b$ respectively while Eve can not, which will provide bit values for $\mathcal{R}_1^b$ and $\mathcal{R}_2^b$ in the $(k+1)$th block and serve as frozen bits. To initiate the transmission, Alice and Bob should share two secret seed sequences of length $|\mathcal{R}_1^b|$ and $|\mathcal{R}_2^b|$ respectively. The chaining scheme is illustrated in Fig. \ref{fig:simul}. The seed rate of this scheme, $$R_{seed}=\frac{|\mathcal{R}_1^b|+|\mathcal{R}_2^b|}{2mN},$$ can be made arbitrarily small by choosing sufficiently large $m$. Details of the encoding and decoding procedures are as follows.
\begin{figure}[tb]
	\centering
	\includegraphics[width=11cm]{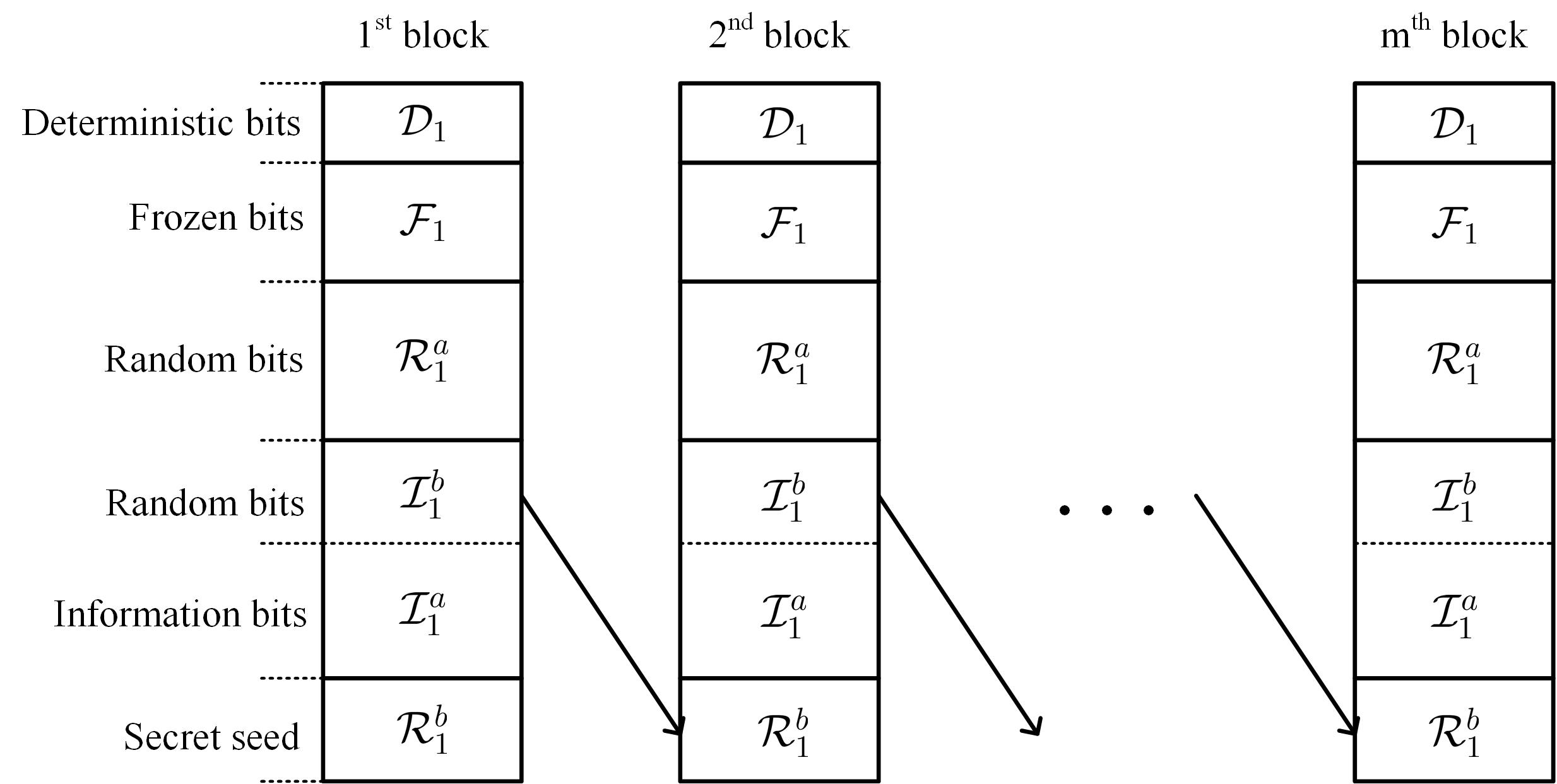}
	\caption{The chaining scheme.} \label{fig:simul}
\end{figure}

\textbf{Encoding:}

In the 1st block, Alice encodes her message as follows:
\begin{itemize}
	\item $\{u_1^i\}_{i\in \mathcal{I}_1^a}$ carry uniformly distributed secret information bits,
	\item $\{u_1^i\}_{i\in \mathcal{R}_1^a\cup \mathcal{I}_1^b}$ are filled with uniformly distributed random bits,
	\item $\{u_1^i\}_{i\in \mathcal{R}_1^b}$ carry uniformly distributed secret seed shared only between two users,
	\item $\{u_1^i\}_{i\in \mathcal{F}_1}$ are filled with uniformly distributed frozen bits (known by everyone, including Eve),
	\item $\{u_1^i\}_{i\in \mathcal{D}_1}$ are assigned by random mappings:
	\begin{equation*}
	u_1^i=\begin{cases}
	0 ~~\text{  w.p. } P_{U^i|U_1^{1:i-1}}(0|u_1^{1:i-1}),\\
	1 ~~\text{  w.p. } P_{U^i|U_1^{1:i-1}}(1|u_1^{1:i-1}).
	\end{cases}
	\end{equation*}
	\item The final codeword is generated as described in Section \ref{SectionIV1}. 
	\item $\{u_1^i\}_{i\in (\mathcal{H}_{C_1}^{(N)})^C\cap (\mathcal{L}_{C_1|Y_1X_2}^{(N)})^C}$ is separately and secretly transmitted to Bob with some reliable error-correcting code.
\end{itemize}

In the $k$th $(1<k\leq m)$ block, $\{u_1^i\}_{i\in \mathcal{R}_1^b}$ are assigned with the same value as $\{u_1^i\}_{i\in \mathcal{I}_1^b}$ in the $(k-1)$th block, and the rest bits are encoded in the same way as in the 1st block.

Bob encodes its message similarly by replacing subscript 1 by 2.

Although our scheme requires separate secret communications between two users, we will show in the next subsection that the rate of them vanishes as the blocklength goes large.

\textbf{Decoding:}

Having received $y_1^{1:N}$ and recovered $\{u_1^i\}_{i\in (\mathcal{H}_{C_1}^{(N)})^C\cap (\mathcal{L}_{C_1|Y_1X_2}^{(N)})^C}$, Bob decodes Alice's message as follows:

In the 1st block, 
\begin{equation*}
\bar{u}_1^{i}=
\begin{cases}
u_1^i,&\text{if } i\in (\mathcal{L}_{C_1|Y_1X_2}^{(N)})^C\\
\arg\max_{u\in\{0,1\}}P_{U_1^{i}|Y_1^{1:N}X_2^{1:N}U_1^{1:i-1}}(u|y_1^{1:N},x_2^{1:N},\bar{u}_1^{1:i-1}),&\text{if } i\in \mathcal{L}_{C_1|Y_1X_2}^{(N)}
\end{cases}.
\end{equation*}

In the $k$th $(1<k\leq m)$ block, $\{\bar{u}_1^i\}_{i\in \mathcal{R}_1^b}$ are deduced from $\{\bar{u}_1^i\}_{i\in \mathcal{I}_1^b}$ in the $(k-1)$th block, and the rest bits are decoded same as in the 1st block.

Alice decodes Bob's message similarly by swapping subscripts 1 and 2.

\subsection{Performance}

\subsubsection{Total Variation Distance}
First, we show that the induced joint distribution by our encoding scheme is asymptotically indistinguishable from the target one. The target joint distribution of polarized variables $U_1^{1:N}V_1^{1:N}U_2^{1:N}V_2^{1:N}$ can be decomposed as
\begin{equation*}
\begin{aligned}
&~~~~P_{U_1^{1:N}V_1^{1:N}U_2^{1:N}V_2^{1:N}}(u_1^{1:N},v_1^{1:N},u_2^{1:N},v_2^{1:N})\\
%&=P_{U_1^{1:N}V_1^{1:N}}(u_1^{1:N}v_1^{1:N})P_{U_2^{1:N}V_2^{1:N}}(u_2^{1:N}v_2^{1:N})\\
&=P_{U_1^{1:N}}(u_1^{1:N})P_{V_1^{1:N}|U_1^{1:N}}(v_1^{1:N}|u_1^{1:N})P_{U_2^{1:N}}(u_2^{1:N})P_{V_2^{1:N}|U_2^{1:N}}(v_2^{1:N}|u_2^{1:N})\\
&=\prod_{i=1}^{N}P(u_1^i|u_1^{1:i-1})P(v_1^i|v_1^{1:i-1},u_1^{1:N})P(u_2^i|u_2^{1:i-1})P(v_2^i|v_2^{1:i-1},u_2^{1:N}).
\end{aligned}
\end{equation*}
According to our encoding rules, the induced joint distribution is
\begin{equation*}
\begin{aligned}
&~~~~Q_{U_1^{1:N}V_1^{1:N}U_2^{1:N}V_2^{1:N}}(u_1^{1:N},v_1^{1:N},u_2^{1:N},v_2^{1:N})\\
&=\prod_{i=1}^{N}Q(u_1^i|u_1^{1:i-1})Q(v_1^i|v_1^{1:i-1},u_1^{1:N})Q(u_2^i|u_2^{1:i-1})Q(v_2^i|v_2^{1:i-1},u_2^{1:N}),
\end{aligned}
\end{equation*}
where for $j=1,2$,
\begin{equation*}
Q(u_j^i|u_j^{1:i-1})\triangleq 
\begin{cases}
\frac{1}{2}, &\text{if}~~i\in \mathcal{H}^{(N)}_{C_j},\\
P(u_j^i|u_j^{1:i-1}), &\text{otherwise}.
\end{cases}
\end{equation*}
from our encoding scheme, and
\begin{equation*}
Q(v_j^i|v_j^{1:i-1},u_j^{1:N})\triangleq 
\begin{cases}
\frac{1}{2}, &\text{if}~~i\in \mathcal{H}^{(N)}_{X_j|C_j},\\
P(v_j^i|v_j^{1:i-1},u_j^{1:N}), &\text{otherwise}.
\end{cases}
\end{equation*}
from our channel prefixing scheme. From \cite[Lemma 5]{goela2015broadcast}\footnote{Although there are differences in the number of random variables and encoding rules between the scheme in \cite[Lemma 5]{goela2015broadcast} and ours, one can readily verify from their proof that this conclusion still holds if we replace $\mathcal{M}_1^{(n)}$ and $\mathcal{M}_2^{(n)}$ in \cite[Lemma 5]{goela2015broadcast} respectively with $\mathcal{H}^{(N)}_{X_j|C_j}$ and $\mathcal{H}^{(N)}_{C_j}$ defined in this paper and apply the chain rule on the Kullback-Leibler divergence.} we have
\begin{equation}
\parallel P_{U_1^{1:N}V_1^{1:N}U_2^{1:N}V_2^{1:N}}-Q_{U_1^{1:N}V_1^{1:N}U_2^{1:N}V_2^{1:N}}\parallel \leq O(N2^{-N^\beta}),
\end{equation}
where $\parallel P-Q \parallel$ denotes the total variation distance between distributions $P$ and $Q$. Since $U_j^{1:N}$ and $V_j^{1:N}$ are linear transformations of $C_j^{1:N}$ and $X_j^{1:N}$, respectively, we can readily derive from the definition of total variational distance that
\begin{align}
\parallel P_{all}-Q_{all}\parallel \leq O(N2^{-N^\beta}),\label{TVD}
\end{align}
where $P_{all}$ is the target joint distribution of random variables $C_1^{1:N}X_1^{1:N}C_2^{1:N}X_2^{1:N}Y_1^{1:N}Y_2^{1:N}Y_e^{1:N}$, and $Q_{all}$ is the induced one by our encoding scheme.

\subsubsection{Reliability}
Let $P_{e1}^{(k)}$ and $P_{e2}^{(k)}$ respectively be the block error probability of Bob's and Alice's decoder in the $k$th block under the assumption that the exact value of all frozen bits (including those to be deduced from the previous block) is provided.  From \cite{honda2013asymmetric} we have
\begin{align*}
P_{e1}^{(k)}&\leq \parallel P_{C_1^{1:N}X_1^{1:N}C_2^{1:N}X_2^{1:N}Y_1^{1:N}}-Q_{C_1^{1:N}X_1^{1:N}C_2^{1:N}X_2^{1:N}Y_1^{1:N}}\parallel+\sum_{i \in \mathcal{L}_{C_1|Y_1X_2}^{(N)}} Z(U_1^i|Y_1^{1:N},X_2^{1:N},U_1^{1:i-1})\\
&= O(N2^{-N^\beta}),
\end{align*}
and similarly $P_{e2}^{(k)}\leq  O(N2^{-N^\beta})$. Then the error probability of the overall $m$ transmission blocks can be upper bounded by
\begin{equation}
\label{ber}
P_e\leq \sum_{k=1}^m P_{e1}^{(k)}+\sum_{k=1}^mP_{e2}^{(k)}= O(mN2^{-N^\beta}).
\end{equation}

\subsubsection{Secrecy}
\label{Section:secrecy}
In the $k$th $(1\leq k \leq m)$ block, denote Alice's secret message bits at $\mathcal{I}_1^a$ by $\mathbf{M}_{1,k}$, random bits at $\mathcal{I}_1^b$ by $\mathbf{E}_{1,k}$, and frozen bits at $\mathcal{F}_1$ by $\mathbf{F}_{1,k}$. Bits at $\mathcal{R}_1^b$ in the $k$th block are equal to those at $\mathbf{E}_{1,k-1}$, with $\mathbf{E}_{1,0}$ being the secret seed. $\mathbf{M}_{2,k}$, $\mathbf{E}_{2,k}$ and $\mathbf{F}_{2,k}$ are similarly defined for Bob's message. Eve's channel output in the $k$th block is denoted by $\mathbf{Y}_{e,k}$. For brevity, denote $\mathbf{M}_k\triangleq (\mathbf{M}_{1,k},\mathbf{M}_{2,k})$, $\mathbf{E}_k\triangleq (\mathbf{E}_{1,k},\mathbf{E}_{2,k})$, $\mathbf{F}_k\triangleq \{\mathbf{F}_{1,k},\mathbf{F}_{2,k}\}$, $\mathbf{M}^k\triangleq \{\mathbf{M}_1,...,\mathbf{M}_k\}$,
$\mathbf{E}^k\triangleq \{\mathbf{E}_1,...,\mathbf{E}_k\}$, $\mathbf{F}^k\triangleq \{\mathbf{F}_1,...,\mathbf{F}_k\}$, and $\mathbf{Y}_e^k\triangleq \{\mathbf{Y}_{e,1},...,\mathbf{Y}_{e,k}\}$.

\newtheorem{lemma}{Lemma}
\begin{lemma}
	\label{lemma:1}
	For any $k\in [m]$, we have
	\begin{equation}
	I(\mathbf{M}_k,\mathbf{E}_k;\mathbf{Y}_{e,k}|\mathbf{F}_k)=O(N^3 2^{-N^{\beta}}).
	\end{equation}
\end{lemma}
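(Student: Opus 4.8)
The plan is to expand the conditional leakage as a difference of two entropies and to show that Eve's observation barely reduces the entropy of the secret and random bits. Recalling the code construction, the pair $(\mathbf{M}_k,\mathbf{E}_k)$ is precisely the collection of bits carried on $\mathcal{I}_1\cup\mathcal{I}_2$, while $\mathbf{F}_k$ is the collection carried on $\mathcal{F}_1\cup\mathcal{F}_2$; their union is $\mathcal{A}_j\triangleq\mathcal{H}^{(N)}_{C_j}\cap\mathcal{H}^{(N)}_{S_{U_j}|Y_e}$ for $j=1,2$, and by (\ref{Be}) every index $i$ in this set satisfies $Z(S^{f_j(i)}|Y_e^{1:N},S^{1:f_j(i)-1})\geq 1-\delta_N$. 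I would therefore write
\[
I(\mathbf{M}_k,\mathbf{E}_k;\mathbf{Y}_{e,k}|\mathbf{F}_k)=H(\mathbf{M}_k,\mathbf{E}_k|\mathbf{F}_k)-H(\mathbf{M}_k,\mathbf{E}_k|\mathbf{Y}_{e,k},\mathbf{F}_k),
\]
and aim to show that the subtracted term is nearly as large as the first.

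First I would evaluate both terms under the target distribution $P_{all}$. Since the bits on $\mathcal{I}_1\cup\mathcal{I}_2$ are uniform and independent of the frozen bits, the first term equals $|\mathcal{I}_1|+|\mathcal{I}_2|$ up to an $O(N\delta_N)$ correction from the near-uniformity guaranteed on $\mathcal{H}^{(N)}_{C_j}$. For the second term I would use $H(\mathbf{M}_k,\mathbf{E}_k|\mathbf{Y}_{e,k},\mathbf{F}_k)=H(\mathbf{M}_k,\mathbf{E}_k,\mathbf{F}_k|\mathbf{Y}_{e,k})-H(\mathbf{F}_k|\mathbf{Y}_{e,k})$, where $(\mathbf{M}_k,\mathbf{E}_k,\mathbf{F}_k)=S^{\mathcal{A}_1\cup\mathcal{A}_2}$. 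Listing the secure indices in the monotone chain-rule order used to form $S^{1:2N}$ and applying the entropy chain rule, each single-letter term obeys $H(S^i|Y_e^{1:N},S^{(\mathcal{A}_1\cup\mathcal{A}_2)\cap[1:i-1]})\geq H(S^i|Y_e^{1:N},S^{1:i-1})$ because conditioning on the full successive-cancellation past only decreases entropy. The key inequality $H(X|Y)\geq Z(X|Y)^2$, together with the defining bound $Z\geq 1-\delta_N$ on $\mathcal{H}^{(N)}_{S_{U_j}|Y_e}$, lower-bounds each such term by $1-2\delta_N$, giving $H(\mathbf{M}_k,\mathbf{E}_k,\mathbf{F}_k|\mathbf{Y}_{e,k})\geq(|\mathcal{A}_1|+|\mathcal{A}_2|)(1-2\delta_N)$. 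Subtracting $H(\mathbf{F}_k|\mathbf{Y}_{e,k})\leq|\mathcal{F}_1|+|\mathcal{F}_2|$ and recalling $|\mathcal{A}_j|=|\mathcal{I}_j|+|\mathcal{F}_j|$ yields $I_P(\mathbf{M}_k,\mathbf{E}_k;\mathbf{Y}_{e,k}|\mathbf{F}_k)=O(N\delta_N)=O(N2^{-N^\beta})$ under the target distribution.

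Finally I would transfer this estimate from $P_{all}$ to the distribution $Q_{all}$ actually induced by the encoder, using the total variation bound (\ref{TVD}), $\parallel P_{all}-Q_{all}\parallel\leq O(N2^{-N^\beta})$, together with the continuity of (conditional) entropy. Applying the continuity bound index by index, the conditioning variable $(Y_e^{1:N},S^{1:i-1})$ ranges over an alphabet of size $2^{O(N)}$, so each per-index transfer costs a factor $O(N)$ times $\parallel P_{all}-Q_{all}\parallel$, i.e. $O(N^2 2^{-N^\beta})$; accumulating over the $O(N)$ secure indices produces the stated bound $O(N^3 2^{-N^\beta})$.

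I expect the main obstacle to be the reconciliation of the conditioning sets: the Bhattacharyya guarantees condition on the entire successive-cancellation past, including the non-secret random and almost-deterministic bits on $\mathcal{R}_1^a$, $\mathcal{R}_1^b$, $\mathcal{D}_1$ (and their user-2 counterparts) that are \emph{not} part of the conditioning in the leakage expression, so the monotonicity steps must be ordered carefully while respecting the interleaving of the $U_1$ and $U_2$ indices inside $S^{1:2N}$. The second delicate point is bookkeeping the continuity constants in the $P_{all}\to Q_{all}$ transfer so that the polynomial prefactor lands exactly at $N^3$; since $2^{-N^\beta}$ dominates any polynomial, a loose tracking of these factors already suffices for the asymptotic secrecy conclusion.
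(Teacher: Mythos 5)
Your proposal is correct and follows essentially the same route as the paper's proof: the same decomposition $H(\mathbf{M}_k,\mathbf{E}_k)-H(\mathbf{M}_k,\mathbf{E}_k,\mathbf{F}_k|\mathbf{Y}_{e,k})+H(\mathbf{F}_k|\mathbf{Y}_{e,k})$, the chain rule with monotonicity to pass to conditioning on the full successive-cancellation past $S^{1:c_i-1}$, the bound $H\geq Z^2\geq(1-\delta_N)^2$ on the indices of $\mathcal{H}^{(N)}_{C_j}\cap\mathcal{H}^{(N)}_{S_{U_j}|Y_e}$, and the per-index total-variation continuity transfer between $P_{all}$ and $Q_{all}$ costing $O(N^2 2^{-N^\beta})$ each, summed over $O(N)$ indices to give $O(N^3 2^{-N^\beta})$. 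The only cosmetic difference is the ordering: you evaluate the leakage under the target distribution first and transfer to the induced one at the end, whereas the paper works under the induced distribution and transfers each conditional-entropy term to the target one individually; the estimates and constants are identical.
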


\begin{proof}
	Let $t=|\mathcal{I}_1|+|\mathcal{I}_2|$ and $w=|\mathcal{F}_1|+|\mathcal{F}_2|$. Denote $\{a_1,a_2,...,a_t\}=\{f_1(i_1),f_2(i_2): i_1 \in \mathcal{I}_1,i_2 \in \mathcal{I}_2\}$ with $a_1<...<a_t$, $\{b_1,b_2,...,b_w\}=\{f_1(i_1),f_2(i_2): i_1 \in \mathcal{F}_1,i_2 \in \mathcal{F}_2\}$ with $b_1<...<b_w$, and $\{c_1,c_2,...,c_{t+w}\}=\{a_1,...,a_t,b_1,...,b_w\}$ with $c_1<...<c_{t+w}$. Then we have
	\begin{align}
	&~~~~I(\mathbf{M}_k,\mathbf{E}_k;\mathbf{Y}_{e,k}|\mathbf{F}_k)\nonumber\\
	&=H(\mathbf{M}_k,\mathbf{E}_k|\mathbf{F}_k)-H(\mathbf{M}_k,\mathbf{E}_k|\mathbf{Y}_{e,k},\mathbf{F}_k)\nonumber\\
	&=H(\mathbf{M}_k,\mathbf{E}_k)-H(\mathbf{M}_k,\mathbf{E}_k,\mathbf{F}_k|\mathbf{Y}_{e,k})+H(\mathbf{F}_k|\mathbf{Y}_{e,k})\nonumber\\
    &=\sum_{i=1}^t H(S^{a_i}|S^{a_1},...,S^{a_{i-1}})-\sum_{i=1}^{t+w} H(S^{c_i}|\mathbf{Y}_{e,k},S^{c_1},...,S^{c_{i-1}})+\sum_{i=1}^w H(S^{b_i}|\mathbf{Y}_{e,k},S^{b_1},...,S^{b_{i-1}})\nonumber\\
	&\leq \sum_{i=1}^{t+w}\big{(}1-H(S^{c_i}|\mathbf{Y}_{e,k},S^{1:c_{i}-1})\big{)}\label{lemma2-1},
	\end{align}
	where (\ref{lemma2-1}) holds because $H(S^{c_i}|\mathbf{Y}_{e,k},S^{c_1},...,S^{c_{i-1}})\geq H(S^{c_i}|\mathbf{Y}_{e,k},S^{1:c_{i}-1})$. Note that the entropy here is calculated under the induced distribution by our encoding scheme. To estimate $I(\mathbf{M}_k,\mathbf{E}_k;\mathbf{Y}_{e,k}|\mathbf{F}_k)$ correctly, let $H_P(S^{c_i}|\mathbf{Y}_{e,k},S^{1:c_{i}-1})$ denote the entropy under the target distribution $P_{C_1^{1:N}X_1^{1:N}C_2^{1:N}X_2^{1:N}Y_1^{1:N}Y_2^{1:N}Y_e^{1:N}}$. According to (\ref{Be}), (\ref{PART1}) and \cite[Proposition 2]{arikan2010source} we have
	\begin{align}
	H_P(S^{c_i}|\mathbf{Y}_{e,k},S^{1:c_{i}-1})\geq Z(S^{c_i}|\mathbf{Y}_{e,k},S^{1:c_{i}-1})^2\geq (1-\delta_N)^2=1-O(2^{-N^{\beta}+1}).\label{Lemma1-1}
	\end{align}
	From \cite[Theorem 17.3.3]{cover2012informtaion} we have
	\begin{align}
	&~~~~|H(S^{c_i}|\mathbf{Y}_{e,k},S^{1:c_{i}-1})-H_P(S^{c_i}|\mathbf{Y}_{e,k},S^{1:c_{i}-1})|\nonumber\\
	&\leq -\parallel P_{C_1^{1:N}X_1^{1:N}C_2^{1:N}X_2^{1:N}Y_e^{1:N}}-Q_{C_1^{1:N}X_1^{1:N}C_2^{1:N}X_2^{1:N}Y_e^{1:N}}\parallel\times\nonumber\\
	&~~~~~~\log\frac{\parallel P_{C_1^{1:N}X_1^{1:N}C_2^{1:N}X_2^{1:N}Y_e^{1:N}}-Q_{C_1^{1:N}X_1^{1:N}C_2^{1:N}X_2^{1:N}Y_e^{1:N}}\parallel}{|\mathcal{Y}_e|^N2^{2N}}\nonumber\\
	&=O(N^2 2^{-N^{\beta}})+O(N^{\beta+1}2^{-N^{\beta}}).\label{Lemma1-2}
	\end{align}
	From (\ref{Lemma1-1}) and (\ref{Lemma1-2}) we have
	\begin{equation*}
	H(S^{c_i}|\mathbf{Y}_{e,k},S^{1:c_{i}-1})\geq 1-O(N^2 2^{-N^{\beta}}).
	\end{equation*}
	Thus,
	\begin{equation}
	I(\mathbf{M}_k,\mathbf{E}_k;\mathbf{Y}_{e,k}|\mathbf{F}_k)\leq O(N^3 2^{-N^{\beta}}).
	\end{equation}	
\end{proof}

Suppose Eve has the knowledge of all frozen bits. The information leakage of our scheme is
\begin{align}
L(N)&=I(\mathbf{M}^m;\mathbf{Y}_e^m|\mathbf{F}^m)\nonumber\\
&\leq I(\mathbf{M}^m,\mathbf{E}_m;\mathbf{Y}_e^m|\mathbf{F}^m)\nonumber\\
&=I(\mathbf{M}^m,\mathbf{E}_m;\mathbf{Y}_{e,m}|\mathbf{F}^m)+I(\mathbf{M}^m,\mathbf{E}_m;\mathbf{Y}_e^{m-1}|\mathbf{F}^m,\mathbf{Y}_{e,m})\nonumber\\
&=I(\mathbf{M}_m,\mathbf{E}_m;\mathbf{Y}_{e,m}|\mathbf{F}^m)+I(\mathbf{M}^m,\mathbf{E}_m;\mathbf{Y}_e^{m-1}|\mathbf{F}^m,\mathbf{Y}_{e,m}) \label{leak-1}\\
&\leq I(\mathbf{M}_m,\mathbf{E}_m;\mathbf{Y}_{e,m}|\mathbf{F}^m)+I(\mathbf{M}^m,\mathbf{E}_m,\mathbf{Y}_{e,m};\mathbf{Y}_e^{m-1}|\mathbf{F}^m)\nonumber\\
&\leq I(\mathbf{M}_m,\mathbf{E}_m;\mathbf{Y}_{e,m}|\mathbf{F}^m)+I(\mathbf{M}^m,\mathbf{E}_{m-1},\mathbf{E}_m,\mathbf{Y}_{e,m};\mathbf{Y}_e^{m-1}|\mathbf{F}^m)\nonumber\\
&= I(\mathbf{M}_m,\mathbf{E}_m;\mathbf{Y}_{e,m}|\mathbf{F}^m)+I(\mathbf{M}^{m-1},\mathbf{E}_{m-1};\mathbf{Y}_e^{m-1}|\mathbf{F}^m), \label{leak-2}\\
&= I(\mathbf{M}_m,\mathbf{E}_m;\mathbf{Y}_{e,m}|\mathbf{F}_m)+I(\mathbf{M}^{m-1},\mathbf{E}_{m-1};\mathbf{Y}_e^{m-1}|\mathbf{F}^{m-1}), \label{leak-3}
\end{align}
where (\ref{leak-1}) and (\ref{leak-2}) are due to the fact that $\mathbf{M}^{m-1}\rightarrow (\mathbf{M}_m,\mathbf{E}_m) \rightarrow \mathbf{Y}_{e,m}$ and $(\mathbf{M}_m,\mathbf{E}_m,\mathbf{Y}_{e,m})\rightarrow (\mathbf{M}^{m-1},\mathbf{E}_{m-1}) \rightarrow \mathbf{Y}_e^{m-1}$ form two Markov chains conditioned on $\mathbf{F}^m$. For (\ref{leak-3}), by the chain rule for mutual information, we have
\begin{align*}
I(\mathbf{M}_m,\mathbf{E}_m;\mathbf{Y}_{e,m}|\mathbf{F}^m)&=I(\mathbf{M}_m,\mathbf{E}_m;\mathbf{Y}_{e,m}|\mathbf{F}_m)+I(\mathbf{M}_m,\mathbf{E}_m;\mathbf{F}^{m-1}|\mathbf{F}_m,\mathbf{Y}_{e,m})\\
&~~~~~~~~-I(\mathbf{M}_m,\mathbf{E}_m;\mathbf{F}^{m-1}|\mathbf{F}_m).
\end{align*}
One can readily verify that no matter we reuse the frozen bits in each block or not, the latter two items in the above equation always equal to 0. Thus, $$I(\mathbf{M}_m,\mathbf{E}_m;\mathbf{Y}_{e,m}|\mathbf{F}^m)=I(\mathbf{M}_m,\mathbf{E}_m;\mathbf{Y}_{e,m}|\mathbf{F}_m).$$
Using a similar analysis we can show that $$I(\mathbf{M}^{m-1},\mathbf{E}_{m-1};\mathbf{Y}_e^{m-1}|\mathbf{F}^m)=I(\mathbf{M}^{m-1},\mathbf{E}_{m-1};\mathbf{Y}_e^{m-1}|\mathbf{F}^{m-1}).$$

By induction hypothesis we have
\begin{equation}
\label{strong-2}
I(\mathbf{M}^m;\mathbf{Y}_e^m|\mathbf{F}^m)\leq \sum_{k=1}^m I(\mathbf{M}_k,\mathbf{E}_k;\mathbf{Y}_{e,k}|\mathbf{F}^k)+I(\mathbf{E}_0;\mathbf{Y}_{e,0}),
\end{equation}
where $I(\mathbf{E}_0;\mathbf{Y}_{e,0})$ is Eve's knowledge about the secret seeds, which should be 0 in a secure coding scheme. From Lemma \ref{lemma:1} we have
\begin{equation}
I(\mathbf{M}^m;\mathbf{Y}_e^m|\mathbf{F}^m)\leq O(mN^3 2^{-N^{\beta}}),
\end{equation}
which means $\lim_{N\rightarrow \infty}L(N)=0$.

\subsubsection{Achievable Rate Region}
\newtheorem{theorem}{Theorem}
  \begin{theorem}
  	\label{theorem:tw}
  	The coding scheme described in Section \ref{S:IV-A} achieves the whole secrecy rate region of the two-way wiretap channel defined in (\ref{SSR}) under the strong secrecy criterion.
  \end{theorem}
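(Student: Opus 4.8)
The plan is to verify reliability, strong secrecy, and the achieved rate pair separately, and then sweep the free parameters to fill the whole region. Reliability is already in hand: by (\ref{ber}) the overall error probability is $P_e = O(mN2^{-N^\beta})\to 0$. Strong secrecy is likewise established: combining Lemma \ref{lemma:1} with the telescoping bound (\ref{strong-2}) yields $L(N)=I(\mathbf{M}^m;\mathbf{Y}_e^m|\mathbf{F}^m)=O(mN^3 2^{-N^\beta})\to 0$. The only remaining task is therefore to show that, for an arbitrary target point $\mathbf{P}_S=(R_{S1},R_{S2})$ on the dominant face of $\mathcal{R}_S(P_{Y_1Y_2Y_e|X_1X_2})$ associated with a fixed $P\in\mathcal{P}$, the secret-information rate pair delivered by the scheme converges to $\mathbf{P}_S$, and then to let $\mathbf{P}_S$, the expansion, and $P$ range over all admissible choices.

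First I would compute the per-user secret rate. The secret bits of user $1$ occupy $\mathcal{I}_1^a=\mathcal{I}_1\setminus\mathcal{I}_1^b$ with $|\mathcal{I}_1^b|=|\mathcal{R}_1^b|$, so over the $m$ chained blocks the rate is $R_1=|\mathcal{I}_1^a|/N=(|\mathcal{I}_1|-|\mathcal{R}_1^b|)/N$. Rather than evaluate $|\mathcal{I}_1|$ and $|\mathcal{R}_1^b|$ individually, I would regroup the partition (\ref{PART1}) of $\mathcal{H}^{(N)}_{C_1}$ along the two criteria ``reliable for Bob'' ($\mathcal{L}^{(N)}_{C_1|Y_1X_2}$) and ``insecure for Eve'' ($(\mathcal{H}^{(N)}_{S_{U_1}|Y_e})^C$). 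This gives the disjoint unions $\mathcal{I}_1\cup\mathcal{R}_1^a=\mathcal{G}_1$ and $\mathcal{R}_1^a\cup\mathcal{R}_1^b=\mathcal{H}^{(N)}_{C_1}\cap(\mathcal{H}^{(N)}_{S_{U_1}|Y_e})^C$, whence the key identity
\begin{equation*}
|\mathcal{I}_1^a|=|\mathcal{I}_1|-|\mathcal{R}_1^b|=|\mathcal{G}_1|-\big|\mathcal{H}^{(N)}_{C_1}\cap(\mathcal{H}^{(N)}_{S_{U_1}|Y_e})^C\big|.
\end{equation*}
By (\ref{reliablerate}), $|\mathcal{G}_1|/N\to I(Y_1;C_1|X_2)$. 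For the second term, Proposition \ref{proposition:MAC} applied to $W_e$ with the chosen expansion forces the unpolarized indices to have vanishing fraction, so $\mathcal{H}^{(N)}_{C_1}\cap(\mathcal{H}^{(N)}_{S_{U_1}|Y_e})^C$ agrees asymptotically with the Eve-reliable set $\mathcal{H}^{(N)}_{C_1}\cap\mathcal{L}^{(N)}_{S_{U_1}|Y_e}$, whose normalized size tends to $R_{e1}$. Hence $R_1\to I(Y_1;C_1|X_2)-R_{e1}=R_{S1}$ by the definition of $R_{e1}$, and symmetrically $R_2\to R_{S2}$. As a byproduct, $|\mathcal{I}_1^a|/N\to R_{S1}>0$ guarantees the feasibility assumption $|\mathcal{I}_1|>|\mathcal{R}_1^b|$ for all large $N$.

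It then remains to argue that the overhead channels do not erode these rates and to fill the region. The seed rate $R_{seed}=(|\mathcal{R}_1^b|+|\mathcal{R}_2^b|)/(2mN)$ vanishes as $m\to\infty$, and the separately and secretly transmitted almost-deterministic bits in $(\mathcal{H}^{(N)}_{C_j})^C\cap(\mathcal{L}^{(N)}_{C_j|Y_jX_{3-j}})^C$ have vanishing rate as $N\to\infty$ by the asymmetric-coding result of Section \ref{S:IIIA}; choosing $N$ large and then $m$ large therefore leaves the net secret rate pair arbitrarily close to $\mathbf{P}_S$. Since $\mathbf{P}_E=(R_{e1},R_{e2})$ can be placed at any point on the dominant face of $\mathcal{R}(W_e)$ in (\ref{eveeffect}) by sweeping the monotone chain rule expansion over the family $0^i1^N0^{N-i}$ (Section \ref{Sec-MAC}), the point $\mathbf{P}_S$ sweeps the entire dominant face of the inner region for the fixed $P$; taking the union over $P\in\mathcal{P}$ and the closure recovers all of $\mathcal{R}_S(P_{Y_1Y_2Y_e|X_1X_2})$ in (\ref{SSR}).

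The step I expect to be the main obstacle is the rate accounting, specifically justifying that $(\mathcal{H}^{(N)}_{S_{U_1}|Y_e})^C$ may be replaced by $\mathcal{L}^{(N)}_{S_{U_1}|Y_e}$ up to a vanishing fraction. This is exactly where Proposition \ref{proposition:MAC} is needed: the two-sided polarization of $Z(S^{f_1(i)}|Y_e^{1:N},S^{1:f_1(i)-1})$ under the scaled expansion is what lets the ``gray zone'' be discarded, and it simultaneously identifies the Eve-reliable fraction with the MAC rate $R_{e1}$. A secondary subtlety is the boundary and degenerate cases: points with $R_{S1}=0$ or $R_{S2}=0$ (pure cooperative jamming) and sub-dominant points are not produced by the positive-secrecy encoder directly, so I would recover them as limits of interior dominant-face points through the closure operation.
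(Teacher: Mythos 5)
Your proposal is correct and follows essentially the same route as the paper's proof: the same regrouping $|\mathcal{I}_1|-|\mathcal{R}_1^b|=|\mathcal{I}_1\cup\mathcal{R}_1^a|-|\mathcal{R}_1^a\cup\mathcal{R}_1^b|$, the same appeal to (\ref{reliablerate}) and (\ref{MAC:rate}), and the same identification of $\frac{1}{N}|\mathcal{H}^{(N)}_{C_j}\cap(\mathcal{H}^{(N)}_{S_{U_j}|Y_e})^C|$ with $R_{ej}$ by discarding the unpolarized indices, with reliability and strong secrecy imported from (\ref{ber}), Lemma \ref{lemma:1} and (\ref{strong-2}) exactly as you do. The one small difference is at the step you flag as the main obstacle: because the sets $\mathcal{H}^{(N)}_{S_{U_j}|Y_e}$ and $\mathcal{L}^{(N)}_{S_{U_j}|Y_e}$ are defined with the vanishing threshold $\delta_N=2^{-N^\beta}$ rather than a fixed $\delta$, the paper justifies $\frac{1}{N}|(\mathcal{H}^{(N)}_{S_{U_j}|Y_e}\cup\mathcal{L}^{(N)}_{S_{U_j}|Y_e})^C|\to 0$ by citing a polarization-rate lemma from the literature instead of the fixed-$\delta$ statement in Proposition \ref{proposition:MAC}, which is the precise form of the ``gray zone'' argument you sketch.
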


  \begin{proof}
  From the previous subsection we can see that the secrecy rate pair of our scheme is
  \begin{equation*}
  \begin{aligned}
    R_1&=\frac{1}{N}\big{(}|\mathcal{I}_1|-|\mathcal{R}_{1}^b|\big{)}\\
    &=\frac{1}{N}\big{(}|\mathcal{I}_1\cup \mathcal{R}_{1}^a|-|\mathcal{R}_{1}^b\cup \mathcal{R}_{1}^a|\big{)}\\
    &=\frac{1}{N}\Big{(}|\mathcal{H}^{(N)}_{C_1} \cap \mathcal{L}^{(N)}_{C_1|Y_1X_2}|-|\mathcal{H}^{(N)}_{C_1} \cap \big{(}\mathcal{H}^{(N)}_{S_{U_1}|Y_e}\big{)}^C|\Big{)},
  \end{aligned}
  \end{equation*}
  \begin{equation*}
  R_2=\frac{1}{N}\Big{(}|\mathcal{H}^{(N)}_{C_2} \cap \mathcal{L}^{(N)}_{C_2|Y_2X_1}|-|\mathcal{H}^{(N)}_{C_2} \cap \big{(}\mathcal{H}^{(N)}_{S_{U_2}|Y_e}\big{)}^C|\Big{)}.
  \end{equation*}
  From (\ref{reliablerate}) we have
  \begin{equation*}
  \lim_{N\rightarrow \infty}\frac{1}{N}|\mathcal{H}^{(N)}_{C_1} \cap \mathcal{L}^{(N)}_{C_1|Y_1X_2}|=I(Y_1;C_1|X_2),~~
  \lim_{N\rightarrow \infty}\frac{1}{N}|\mathcal{H}^{(N)}_{C_2} \cap \mathcal{L}^{(N)}_{C_2|Y_2X_1}|=I(Y_2;C_2|X_1).
  \end{equation*}
  From (\ref{MAC:rate}) we have
  \begin{equation*}
  \lim\limits_{N\rightarrow \infty}\frac{1}{N}|\mathcal{H}^{(N)}_{S_{U_1}}\cap \mathcal{L}^{(N)}_{S_{U_1}|Y_e}|=R_{e1},~~
  \lim\limits_{N\rightarrow \infty}\frac{1}{N}|\mathcal{H}^{(N)}_{S_{U_2}}\cap \mathcal{L}^{(N)}_{S_{U_2}|Y_e}|=R_{e2}.
  \end{equation*}
  Since $\mathcal{H}^{(N)}_{S_{U_j}}=\mathcal{H}^{(N)}_{C_j}$ and $$\lim\limits_{N\rightarrow \infty}\frac{1}{N}|\big{(} \mathcal{H}^{(N)}_{S_{U_j}|Y_e}\cup \mathcal{L}^{(N)}_{S_{U_j}|Y_e} \big{)}^C|=0$$ for $j=1,2$ by \cite[Lemma 1]{chou2015keygen}, we have
  \begin{equation*}
  \lim\limits_{N\rightarrow \infty}\frac{1}{N}|\mathcal{H}^{(N)}_{C_1} \cap \big{(}\mathcal{H}^{(N)}_{S_{U_1}|Y_e}\big{)}^C|=R_{e1},~~
  \lim\limits_{N\rightarrow \infty}\frac{1}{N}|\mathcal{H}^{(N)}_{C_2} \cap \big{(}\mathcal{H}^{(N)}_{S_{U_2}|Y_e}\big{)}^C|=R_{e2}.
  \end{equation*}
  Thus,
  \begin{equation}
  \lim_{N\rightarrow \infty}R_1=I(Y_1;C_1|X_2)-R_{e1}=R_{S1},~~
  \lim_{N\rightarrow \infty}R_2=I(Y_2;C_2|X_1)-R_{e2}=R_{S2}.
  \end{equation}
  Since $(R_{S1},R_{S2})$ is an arbitrary point on the dominant face of $\mathcal{R}_S(P_{Y_1Y_2Y_e|X_1X_2})$, we can say that the whole secrecy rate region is achievable.
  \end{proof}

  \subsubsection{Rate of Shared Randomness}
  We further discuss the rate of shared randomness required in our scheme. As we have shown, the randomized frozen bits only need to be independently and uniformly distributed and can be known by Eve, they thus can be reused over blocks\footnote{To prove it rigorously, with a similar analysis to the proof of \cite[Lemma 8]{chou2015approx}, we can obtain $\parallel P^m-Q^m \parallel \leq O(mN2^{-N^\beta})$, where $P^m$ and $Q^m$ respectively are the target joint distribution and induced joint distribution of random variables $C_1^{1:N}X_1^{1:N}C_2^{1:N}X_2^{1:N}Y_1^{1:N}Y_2^{1:N}Y_e^{1:N}$ in the overall $m$ blocks when the frozen bits are reused. Thus, the error probability still vanishes as $N$ goes to infinity. Since we assume that Eve knows the frozen bits, reusing them also does not harm security.}. In our scheme, suppose each user uses the same frozen bits over $m$ chained blocks. Then the rate of the shared frozen bits is
  \begin{equation}
  R_{F}=\frac{|\mathcal{F}_1|+|\mathcal{F}_2|}{2mN}.
  \end{equation}
  Since $|\mathcal{F}_1|+|\mathcal{F}_2|=O(N)$, we can see that $R_{F}$ can be made arbitrarily small by choosing sufficiently large $m$. Besides, they can actually be generated without sacrificing any transmission rate. For example, all communicators (including the eavesdropper) can use the same pseudorandom generator (PRG) to produce the same pseudorandom frozen bits by inputing the same seed to the PRG, such as the current time.
  
  The rate of the shared almost deterministic bits which need to be separately and secretly exchanged after each transmission block is
  \begin{equation}
  R_{D}=\frac{1}{2N}\big{(}|(\mathcal{H}_{C_1}^{(N)})^C\cap (\mathcal{L}_{C_1|Y_1X_2}^{(N)})^C|+|(\mathcal{H}_{C_2}^{(N)})^C\cap (\mathcal{L}_{C_2|Y_2X_1}^{(N)})^C|\big{)}.
  \end{equation}
  Similar to the point-to-point channel case introduced in Section \ref{S:IIIA}, we have $\lim\limits_{N\rightarrow \infty}R_{D}=0$. 
  
  Recall from Section \ref{S:PCTW} that the secrecy seed rate $R_{seed}$ can also be made arbitrarily small with large $m$. From the above we can conclude that the overall rate of shared randomness required in our scheme can be made negligible by using sufficiently long blocklength and sufficient number of chained blocks.

\section{Special Case: Achieving Weak Secrecy within a Single Transmission Block}
\label{S:V}

  In the traditional one-way wiretap channel, as has been shown in \cite{mahdavifar2011achieving,koyluoglu2012polar}, if the eavesdropper channel is degraded with respect to the main channel, the reliable bit set of a polar code designed for the eavesdropper channel will be a subset of that for the main channel \cite{korada2009polar}, and the secrecy capacity can be achieved under the weak secrecy criterion within a single transmission block. In the two-way wiretap channel, a similar case also exists.  
  
  \begin{definition}
  	Let $P_1:\mathcal{X}_1 \times \mathcal{X}_2 \rightarrow\mathcal{Y}_1$ and $P_2:\mathcal{X}_1 \times \mathcal{X}_2 \rightarrow\mathcal{Y}_2$ be two discrete memoryless multiple access channels, then we say $P_2$ is degraded with respect to $P_1$ (denoted by $P_1 \succ P_2$) if there exists a third channel $P_3:\mathcal{Y}_1\rightarrow\mathcal{Y}_2$ such that $P_2(y_2|x_1,x_2)=\sum_{y_1\in \mathcal{Y}_1}P_1(y_1|x_1,x_2)P_3(y_2|y_1)$ for all $(x_1,x_2)\in \mathcal{X}_1 \times \mathcal{X}_2$ and $y_2\in \mathcal{Y}_2$.
  \end{definition}
  
  \begin{lemma}
  	If $P_{Y_1|C_1,C_2}\succ P_{Y_e|C_1,C_2}$ and $P_{Y_2|C_1,C_2}\succ P_{Y_e|C_1,C_2}$, then we have
  	\begin{equation}
  	\mathcal{L}^{(N)}_{S_{U_1}|Y_e}\subseteq \mathcal{L}^{(N)}_{C_1|Y_1X_2},~~
  	\mathcal{L}^{(N)}_{S_{U_2}|Y_e}\subseteq \mathcal{L}^{(N)}_{C_2|Y_2X_1},
  	\end{equation}
  	where $\mathcal{L}^{(N)}_{S_{U_1}|Y_e}$ and $\mathcal{L}^{(N)}_{S_{U_2}|Y_e}$ are defined in (\ref{Ge}), and $\mathcal{L}^{(N)}_{C_1|Y_1X_2}$ and $\mathcal{L}^{(N)}_{C_2|Y_2X_1}$ are defined in (\ref{G1}) and (\ref{G2}).
  \end{lemma}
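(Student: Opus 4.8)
The plan is to reduce both inclusions to a single chain of Bhattacharyya-parameter inequalities and then read off the set containment directly from the thresholds $\delta_N$ defining the two sets. Concretely, for the first inclusion it suffices to prove, for every $i\in[N]$, that
\begin{equation*}
Z(U_1^i|Y_1^{1:N},X_2^{1:N},U_1^{1:i-1})\leq Z(S^{f_1(i)}|Y_e^{1:N},S^{1:f_1(i)-1}),
\end{equation*}
since then $i\in\mathcal{L}^{(N)}_{S_{U_1}|Y_e}$ (right side $\leq\delta_N$) forces $i\in\mathcal{L}^{(N)}_{C_1|Y_1X_2}$ (left side $\leq\delta_N$). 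I would build this inequality out of two standard monotonicity principles: that additional conditioning cannot increase the Bhattacharyya parameter, and that stochastic degradation of the underlying channel is preserved by the polar transform and only increases the synthetic Bhattacharyya parameters.

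First I would dispose of the mismatch in what is being conditioned on. Recall $S^{f_1(i)}=U_1^i$, and the prefix $S^{1:f_1(i)-1}$ consists of $U_1^{1:i-1}$ together with a subset of the coordinates of $U_2^{1:N}$ determined by the monotone chain rule expansion. Since conditioning on more variables can only decrease $Z$, and since $U_2^{1:N}$ and $C_2^{1:N}$ are related by the invertible map $\mathbf{G}_N$,
\begin{equation*}
Z(U_1^i|Y_e^{1:N},C_2^{1:N},U_1^{1:i-1})=Z(U_1^i|Y_e^{1:N},U_2^{1:N},U_1^{1:i-1})\leq Z(S^{f_1(i)}|Y_e^{1:N},S^{1:f_1(i)-1}).
\end{equation*}
The left-hand quantity is precisely the synthetic Bhattacharyya parameter obtained by polarizing $C_1$ over the single-use channel $C_1\mapsto(Y_e,C_2)$ in which $C_2$ is revealed, because fixing $C_2^{1:N}$ marginalizes $X_2$ according to $P_{X_2|C_2}$ and turns the per-letter law into Eve's effective channel $W_e(\cdot|c_1,c_2)$.

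The heart of the argument, and the step I expect to be hardest, is a degradation comparison between two single-use channels that carry \emph{different} side information: I must show that $C_1\mapsto(Y_e,C_2)$ is stochastically degraded with respect to Bob's channel $C_1\mapsto(Y_1,X_2)$. I would prove this by transitivity through the intermediate channel $C_1\mapsto(Y_1,C_2)$. For the first leg, the hypothesis $P_{Y_1|C_1C_2}\succ P_{Y_e|C_1C_2}$ supplies a kernel $P_3$ with $W_e(y_e|c_1,c_2)=\sum_{y_1}P_{Y_1|C_1C_2}(y_1|c_1,c_2)P_3(y_e|y_1)$; applying $P_3$ to the $Y_1$-coordinate while leaving $C_2$ untouched degrades $(Y_1,C_2)$ to $(Y_e,C_2)$. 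For the second leg I would exploit the Markov structure $C_2\to X_2$ with the two prefixing branches independent, which yields the conditional independence $C_2\perp(C_1,Y_1)\mid X_2$; hence the kernel that keeps $Y_1$ and redraws $C_2$ from $P_{C_2|X_2}(\cdot|x_2)$ maps the joint law of $(C_1,Y_1,X_2)$ exactly onto that of $(C_1,Y_1,C_2)$, degrading $(Y_1,X_2)$ to $(Y_1,C_2)$. Composing the two kernels gives $(Y_1,X_2)\succ(Y_e,C_2)$, and since degradation is preserved by the polar transform (both channels share the fixed input law $P_{C_1}$),
\begin{equation*}
Z(U_1^i|Y_1^{1:N},X_2^{1:N},U_1^{1:i-1})\leq Z(U_1^i|Y_e^{1:N},C_2^{1:N},U_1^{1:i-1}).
\end{equation*}

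Chaining the last two displays establishes the target inequality and hence $\mathcal{L}^{(N)}_{S_{U_1}|Y_e}\subseteq\mathcal{L}^{(N)}_{C_1|Y_1X_2}$; the inclusion $\mathcal{L}^{(N)}_{S_{U_2}|Y_e}\subseteq\mathcal{L}^{(N)}_{C_2|Y_2X_1}$ follows by exchanging the roles of the two users. The only genuinely delicate point is the side-information bridge in the second leg: the legitimate receiver conditions on its own channel input $X_2$, whereas the MAC polarization of Eve's channel conditions at most on $C_2$, so the degradation has to be argued across this mismatch rather than read off directly from the stated channel degradation. I would take care to verify that the redrawing kernel $P_{C_2|X_2}$ is independent of $C_1$ given $X_2$ — which is exactly the conditional independence noted above — so that it is a legitimate, message-independent degrading channel.
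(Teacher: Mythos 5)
Your proof is correct and follows essentially the same route as the paper's: both reduce the inclusion to the per-index inequality $Z(U_1^i|Y_1^{1:N},X_2^{1:N},U_1^{1:i-1})\leq Z(S^{f_1(i)}|Y_e^{1:N},S^{1:f_1(i)-1})$ using conditioning monotonicity over the monotone-expansion prefix, the degradation hypothesis (the paper cites \cite[Lemma 4.7]{korada2009polar}), and the $X_2$-versus-$C_2$ side-information bridge, which the paper phrases as the Markov chain $(C_1,C_2)\rightarrow(C_1,X_2)\rightarrow Y_1$ and you phrase as $C_2\perp (C_1,Y_1)\mid X_2$. The only (cosmetic) difference is that you compose the two degrading kernels into a single degradation $(Y_1,X_2)\succ(Y_e,C_2)$ and invoke polarization monotonicity once, whereas the paper applies it twice, chaining through the intermediate quantities $Z(U_1^i|Y_1^{1:N},U_2^{1:N},U_1^{1:i-1})$ and $Z(U_1^i|Y_e^{1:N},U_2^{1:N},U_1^{1:i-1})$.
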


  \begin{proof}  
  Since $S^{1:2N}$ is a permutation of $U_1^{1:N}U_2^{1:N}$, we have 
  \begin{equation*}
  Z(S^{f_1(i)}|Y_e^{1:N}, S^{1:f_1(i)-1})\geq Z(U_1^i|Y_e^{1:N},U_2^{1:N},U_1^{1:i-1}).
  \end{equation*}
  And since $(C_1,C_2)\rightarrow (C_1,X_2)\rightarrow Y_1$ forms a Markov chain, we have
  \begin{equation*}
  Z(U_1^i|Y_1^{1:N},U_2^{1:N},U_1^{1:i-1})\geq Z(U_1^i|Y_1^{1:N},X_2^{1:N},U_1^{1:i-1}).
  \end{equation*}  
  If $P_{Y_1|C_1,C_2}\succ P_{Y_e|C_1,C_2}$, then \cite[Lemma 4.7]{korada2009polar}
  \begin{equation*}
  Z(U_1^i|Y_1^{1:N},U_2^{1:N},U_1^{1:i-1})\leq Z(U_1^i|Y_e^{1:N},U_2^{1:N},U_1^{1:i-1}).
  \end{equation*}
  Thus,
  \begin{equation*}
  Z(U_1^i|Y_1^{1:N},X_2^{1:N},U_1^{1:i-1})\leq Z(S^{f_1(i)}|Y_e^{1:N}, S^{1:f_1(i)-1}).
  \end{equation*}
  From the definitions of the polarized sets we can see that $\mathcal{L}^{(N)}_{S_{U_1}|Y_e}\subseteq \mathcal{L}^{(N)}_{C_1|Y_1X_2}$. Similarly we can show that $\mathcal{L}^{(N)}_{S_{U_2}|Y_e}\subseteq \mathcal{L}^{(N)}_{C_2|Y_2X_1}$.

  \end{proof}

  In this special case, we partition indices of $U_1^{1:N}$ into four sets:
  \begin{equation}
  \begin{aligned}
  \label{PART1s}
  \mathcal{I}_1&=\mathcal{H}^{(N)}_{C_1} \cap \mathcal{L}^{(N)}_{C_1|Y_1X_2} \cap \big{(}\mathcal{L}^{(N)}_{S_{U_1}|Y_e}\big{)}^C,\\
  \mathcal{F}_1&=\mathcal{H}^{(N)}_{C_1} \cap \big{(}\mathcal{L}^{(N)}_{C_1|Y_1X_2}\big{)}^C \cap \big{(}\mathcal{L}^{(N)}_{S_{U_1}|Y_e}\big{)}^C,\\
  \mathcal{R}_1&=\mathcal{H}^{(N)}_{C_1} \cap \mathcal{L}^{(N)}_{C_1|Y_1X_2} \cap \mathcal{L}^{(N)}_{S_{U_1}|Y_e},\\
  \mathcal{D}_1&=\big{(}\mathcal{H}^{(N)}_{C_1}\big{)}^C,
  \end{aligned}
  \end{equation}
  as illustrated in Fig. \ref{fig:codecons-s}. Similarly, indices of $U_2^{1:N}$ are partitioned into $\mathcal{I}_2$, $\mathcal{F}_2$, $\mathcal{R}_2$ and $\mathcal{D}_2$.
  
  The coding scheme is then simple. $\mathcal{I}_1$ and $\mathcal{I}_2$ carry information bits, $\mathcal{R}_1$ and $\mathcal{R}_2$ are filled with random bits, $\mathcal{F}_1$ and $\mathcal{F}_2$ carry frozen bits, and $\mathcal{D}_1$ and $\mathcal{D}_2$ are determined with random mappings. Similar to the strong secrecy scheme, a vanishing fraction of the deterministic bits, $\{u_1^i\}_{i\in (\mathcal{H}_{C_1}^{(N)})^C\cap (\mathcal{L}_{C_1|Y_1X_2}^{(N)})^C}$ and $\{u_2^i\}_{i\in (\mathcal{H}_{C_2}^{(N)})^C\cap (\mathcal{L}_{C_2|Y_2X_1}^{(N)})^C}$, are secretly exchanged between Alice and Bob.

  \begin{figure}[tb]
	\centering
	\includegraphics[width=8cm]{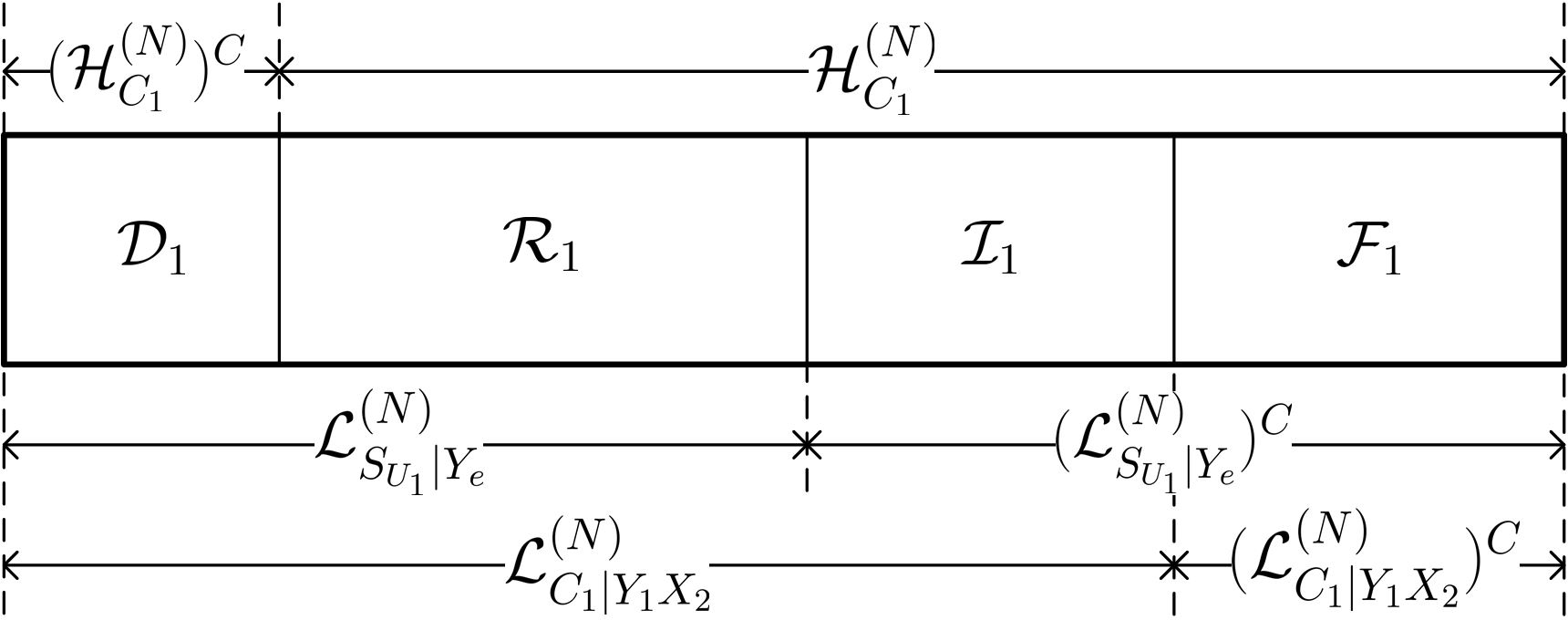}
	\caption{Code construction for Alice in the degraded case.} \label{fig:codecons-s}
  \end{figure}

  \begin{theorem}
  	\label{theorem:tws}
  	If the eavesdropper channel is degraded with respect to both legitimate channels, the coding scheme described in this section achieves all points on the dominant face of the secrecy rate region of the two-way wiretap channel defined in (\ref{SSR}) under the weak secrecy criterion.
  \end{theorem}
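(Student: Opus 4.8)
The plan is to establish the three usual ingredients for the single-block construction of (\ref{PART1s}): reliability, weak secrecy, and achievability of every point on the dominant face. Reliability is essentially inherited from the asymmetric point-to-point analysis of Section \ref{S:IIIA}. The bits Bob must recover are exactly $\mathcal{I}_1\cup\mathcal{R}_1=\mathcal{H}^{(N)}_{C_1}\cap\mathcal{L}^{(N)}_{C_1|Y_1X_2}$, while $\mathcal{F}_1$ is frozen and the set $\mathcal{D}_1$ together with the vanishing fraction in $(\mathcal{H}^{(N)}_{C_1})^C\cap(\mathcal{L}^{(N)}_{C_1|Y_1X_2})^C$ are made available to Bob as in the strong-secrecy scheme. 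Hence the successive-cancellation bound (\ref{SC-EP}) together with the total variation estimate (\ref{TVD}) yields $P_{e1}\le O(N2^{-N^\beta})$, symmetrically for Alice, so $P_e\to 0$.

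For weak secrecy I would adapt the manipulation of Lemma \ref{lemma:1}, but relax the per-index estimate. Writing $\mathcal{A}=\{f_1(i):i\in\mathcal{I}_1\}\cup\{f_2(i):i\in\mathcal{I}_2\}$ for the positions of the secret bits in the permuted sequence $S^{1:2N}$, the messages are i.i.d.\ uniform, so $H(S^{\mathcal{A}})=|\mathcal{A}|$; using the chain rule over $\mathcal{A}$ and the fact that conditioning reduces entropy gives $H(S^{\mathcal{A}}|Y_e^{1:N})\ge\sum_{i\in\mathcal{A}}H(S^i|Y_e^{1:N},S^{1:i-1})$, whence
\begin{equation*}
L(N)=I(S^{\mathcal{A}};Y_e^{1:N})\le\sum_{i\in\mathcal{A}}\big(1-H(S^i|Y_e^{1:N},S^{1:i-1})\big).
\end{equation*}
By (\ref{PART1s}) every $i\in\mathcal{A}$ lies outside $\mathcal{L}^{(N)}_{S_{U_1}|Y_e}$ or $\mathcal{L}^{(N)}_{S_{U_2}|Y_e}$, i.e.\ $Z(S^i|Y_e^{1:N},S^{1:i-1})>\delta_N$. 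I then split $\mathcal{A}$: for indices with $Z\ge 1-\delta_N$ the bound $H\ge Z^2$ of \cite[Proposition 2]{arikan2010source} gives $1-H\le O(2^{-N^\beta})$, while the remaining indices fall in the unpolarized band $\delta_N<Z<1-\delta_N$, whose size is $o(N)$ by Proposition \ref{proposition:MAC} and which contributes at most $1$ each. As in Lemma \ref{lemma:1}, the entropies are under the induced distribution, so the passage to the target distribution is controlled by (\ref{TVD}) and \cite[Theorem 17.3.3]{cover2012informtaion}, adding only an $O(N^3 2^{-N^\beta})$ term. Collecting terms gives $L(N)\le o(N)+O(N^3 2^{-N^\beta})$, hence $L_R(N)=L(N)/N\to 0$, which is (\ref{WSC}).

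For the achievable rates I would mirror the proof of Theorem \ref{theorem:tw}. The secret-bit rate of Alice is $R_1=|\mathcal{I}_1|/N$, and from (\ref{PART1s}),
\begin{equation*}
\tfrac{1}{N}|\mathcal{I}_1|=\tfrac{1}{N}|\mathcal{H}^{(N)}_{C_1}\cap\mathcal{L}^{(N)}_{C_1|Y_1X_2}|-\tfrac{1}{N}|\mathcal{H}^{(N)}_{C_1}\cap\mathcal{L}^{(N)}_{C_1|Y_1X_2}\cap\mathcal{L}^{(N)}_{S_{U_1}|Y_e}|.
\end{equation*}
The degradation property $\mathcal{L}^{(N)}_{S_{U_1}|Y_e}\subseteq\mathcal{L}^{(N)}_{C_1|Y_1X_2}$ from the preceding lemma collapses the second set to $\mathcal{H}^{(N)}_{C_1}\cap\mathcal{L}^{(N)}_{S_{U_1}|Y_e}=\mathcal{H}^{(N)}_{S_{U_1}}\cap\mathcal{L}^{(N)}_{S_{U_1}|Y_e}$, whose normalized size tends to $R_{e1}$ by (\ref{MAC:rate}); with (\ref{reliablerate}) this gives $\lim_{N\to\infty}R_1=I(Y_1;C_1|X_2)-R_{e1}=R_{S1}$, and symmetrically $\lim_{N\to\infty}R_2=R_{S2}$. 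As the permutation $S^{1:2N}$ is varied so that $(R_{e1},R_{e2})$ sweeps the dominant face of $\mathcal{R}(W_e)$, the pair $(R_{S1},R_{S2})$ sweeps the dominant face of the secrecy region, proving the claim.

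The main obstacle is the weak-secrecy step. Unlike the strong-secrecy construction, the secret indices are only guaranteed to avoid $\mathcal{L}^{(N)}_{S_{U_j}|Y_e}$ and are \emph{not} forced into the high-entropy set $\mathcal{H}^{(N)}_{S_{U_j}|Y_e}$, so the per-index leakage need not be exponentially small and the argument hinges on controlling the unpolarized band through Proposition \ref{proposition:MAC}. It is precisely the residual $o(N)$ contribution of that band that prevents $L(N)$ itself from vanishing, which is exactly why only weak secrecy, rather than the strong secrecy of Theorem \ref{theorem:tw}, is obtained within a single block.
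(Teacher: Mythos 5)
Your overall route is the paper's: single-block reliability via (\ref{TVD}) and the successive-cancellation bound, a per-index entropy bound on the leakage split between indices that are almost deterministic for Eve and the unpolarized band, and the rate computation via $\mathcal{L}^{(N)}_{S_{U_1}|Y_e}\subseteq\mathcal{L}^{(N)}_{C_1|Y_1X_2}$, $\mathcal{H}^{(N)}_{C_j}=\mathcal{H}^{(N)}_{S_{U_j}}$, (\ref{reliablerate}) and (\ref{MAC:rate}). However, your secrecy step has a genuine gap: you take $\mathcal{A}$ to contain only the information positions and bound the unconditional quantity $I(S^{\mathcal{A}};Y_e^{1:N})$. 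The scheme's frozen bits are explicitly public (``known by everyone, including Eve''), so the leakage that must be controlled is $I(\mathbf{M};\mathbf{Y}_e|\mathbf{F})$; since $\mathbf{M}$ and $\mathbf{F}$ are independent, $I(\mathbf{M};\mathbf{Y}_e)\leq I(\mathbf{M};\mathbf{Y}_e|\mathbf{F})$, so your bound is on the strictly weaker quantity and does not cover an eavesdropper who knows $\mathbf{F}$. The paper handles this exactly as in Lemma \ref{lemma:1}: write $I(\mathbf{M};\mathbf{Y}_e|\mathbf{F})=H(\mathbf{M})-H(\mathbf{M},\mathbf{F}|\mathbf{Y}_e)+H(\mathbf{F}|\mathbf{Y}_e)$ and chain-rule this into a per-index sum over $\mathcal{A}$ defined as the positions of $\mathcal{I}_1\cup\mathcal{F}_1$ and $\mathcal{I}_2\cup\mathcal{F}_2$, not just $\mathcal{I}_1\cup\mathcal{I}_2$. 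The repair fits your framework with no extra work, because by (\ref{PART1s}) one has $\mathcal{I}_j\cup\mathcal{F}_j=\mathcal{H}^{(N)}_{C_j}\cap\big(\mathcal{L}^{(N)}_{S_{U_j}|Y_e}\big)^C$, so the frozen positions also avoid $\mathcal{L}^{(N)}_{S_{U_j}|Y_e}$ and your two-way split (indices in $\mathcal{H}^{(N)}_{S_{U_j}|Y_e}$ versus the unpolarized band) applies to them verbatim; this is precisely the paper's partition of $\mathcal{A}$ into $\mathcal{B}$ and $\mathcal{A}\setminus\mathcal{B}$.

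A second, smaller flaw: you justify that the unpolarized band is $o(N)$ by Proposition \ref{proposition:MAC}, but that proposition is stated for a \emph{fixed} threshold $\delta>0$, whereas your band is $\{i:\delta_N<Z<1-\delta_N\}$ with $\delta_N=2^{-N^\beta}\to 0$; the fixed-$\delta$ statement does not by itself imply this larger set has vanishing fraction. The paper invokes \cite[Lemma 1]{chou2015keygen}, i.e.\ a rate-of-polarization statement with the $\delta_N$ thresholds, for exactly this point, and you need the same. With these two repairs your argument coincides with the paper's proof, including your closing observation that the $o(N)$ band is what limits the single-block scheme to weak rather than strong secrecy.
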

  \begin{proof}
  	
  \textit{Reliability:}
  In the weak secrecy scheme, the total variation distance and the error probability can be analyzed similarly to the strong secrecy one (except that there is no chaining). Thus, (\ref{TVD}) also holds, and
  \begin{equation}
  \label{Pe-s}
  \begin{aligned}
  P_e(N)\leq O(N2^{-N^\beta}).
  \end{aligned}
  \end{equation}

  \textit{Secrecy:}
  Since in this section we only consider a single transmission block, we drop the subscripts and superscripts for block numbers in notations used in Section \ref{Section:secrecy}. Similar to (\ref{lemma2-1}), the information leakage can be upper bounded by
  \begin{align}
    I(\mathbf{M};\mathbf{Y}_e|\mathbf{F})&=H(\mathbf{M})-H(\mathbf{M},\mathbf{F}|\mathbf{Y}_e)+H(\mathbf{F}|\mathbf{Y}_e)\nonumber\\
    &\leq \sum_{i\in\mathcal{A}}\big{(}1-H(S^i|\mathbf{Y}_{e,k},S^{1:i-1})\big{)},
  \end{align}
  where $$\mathcal{A}=\{f_1(i_1),f_2(i_2): i_1 \in \mathcal{I}_1\cup \mathcal{F}_1,i_2 \in \mathcal{I}_2\cup \mathcal{F}_2\}.$$ From (\ref{PART1s}) we have $\mathcal{I}_j\cup \mathcal{F}_j=\mathcal{H}^{(N)}_{C_1}\cap \big{(}\mathcal{L}^{(N)}_{S_{U_1}|Y_e}\big{)}^C$ for $j=1,2$. Define 
  $$\mathcal{B}=\{f_1(i_1),f_2(i_2): i_1 \in \mathcal{H}^{(N)}_{C_1}\cap \mathcal{H}^{(N)}_{S_{U_1}|Y_e}, i_2 \in \mathcal{H}^{(N)}_{C_2}\cap \mathcal{H}^{(N)}_{S_{U_2}|Y_e}\},$$
  and let $H_P(S^i|\mathbf{Y}_{e,k},S^{1:i-1})$ denote the entropy of $S^i$ conditioned on $(\mathbf{Y}_{e,k},S^{1:i-1})$ under the target distribution $P_{C_1^{1:N}X_1^{1:N}C_2^{1:N}X_2^{1:N}Y_1^{1:N}Y_2^{1:N}Y_e^{1:N}}$. Then
  \begin{align}
  \sum_{i\in\mathcal{A}}H_P(S^i|\mathbf{Y}_{e,k},S^{1:i-1})&=\sum_{i\in\mathcal{B}}H_P(S^i|\mathbf{Y}_{e,k},S^{1:i-1})+\sum_{i\in\mathcal{A}\setminus\mathcal{B}}H_P(S^i|\mathbf{Y}_{e,k},S^{1:i-1})\nonumber\\
  &\geq\sum_{i\in\mathcal{B}}Z(S^i|\mathbf{Y}_{e,k},S^{1:i-1})^2+\sum_{i\in\mathcal{A}\setminus\mathcal{B}}Z(S^i|\mathbf{Y}_{e,k},S^{1:i-1})^2\label{leak-s1}\\
  &\geq |\mathcal{B}|(1-\delta_N)^2+(|\mathcal{A}|-|\mathcal{B}|)\delta_N^2,\nonumber
  \end{align}
  where (\ref{leak-s1}) holds from \cite[Proposition 2]{arikan2010source}. From (\ref{Lemma1-2}) we have
  \begin{equation*}
  \sum_{i\in\mathcal{A}}H(S^i|\mathbf{Y}_{e,k},S^{1:i-1})\geq |\mathcal{B}|(1-\delta_N)^2+(|\mathcal{A}|-|\mathcal{B}|)\delta_N^2-O(N^3 2^{-N^{\beta}}).
  \end{equation*}
  Thus, the information leakage rate can be upper bounded by
  \begin{align}
  L_R(N)&=\frac{1}{N}I(\mathbf{M};\mathbf{Y}_e|\mathbf{F})\nonumber\\
  &\leq \frac{1}{N}\big{(}|\mathcal{A}|-|\mathcal{B}|+2|\mathcal{B}|\delta_N-|\mathcal{A}|\delta_N^2+O(N^3 2^{-N^{\beta}})\big{)}.
  \end{align}
  Since $|\mathcal{A}|-|\mathcal{B}|=|\mathcal{A}\setminus\mathcal{B}|=o(N)$ by \cite[Lemma 1]{chou2015keygen},  we have
  \begin{equation}
  \label{L-s}
  \lim_{N\rightarrow \infty}L_R(N)=0.
  \end{equation}

  \textit{Achievable rate region:}
  Since $\mathcal{L}^{(N)}_{S_{U_1}|Y_e}\subseteq \mathcal{L}^{(N)}_{C_1|Y_1X_2}$ and $\mathcal{L}^{(N)}_{S_{U_2}|Y_e}\subseteq \mathcal{L}^{(N)}_{C_2|Y_2X_1}$, we have
  \begin{align*}
  R_1=\frac{1}{N}|\mathcal{H}^{(N)}_{C_1} \cap \mathcal{L}^{(N)}_{C_1|Y_1X_2}|-\frac{1}{N}|\mathcal{H}^{(N)}_{C_1} \cap \mathcal{L}^{(N)}_{S_{U_1}|Y_e}|,
  \end{align*}
  \begin{equation*}
  R_2=\frac{1}{N}|\mathcal{H}^{(N)}_{C_2} \cap \mathcal{L}^{(N)}_{C_2|Y_2X_1}|-\frac{1}{N}|\mathcal{H}^{(N)}_{C_2} \cap \mathcal{L}^{(N)}_{S_{U_2}|Y_e}|.
  \end{equation*}
  
  We can then use a similar analysis to the strong secrecy case and show that this scheme achieves all points on the dominant face of $\mathcal{R}_S(P_{Y_1Y_2Y_e|X_1X_2})$. For conciseness we omit it here.
  
  Now we have finished the proof for Theorem \ref{theorem:tws}.

  \end{proof}

\section{Example: Binary Erasure Channels}
\label{S:VI}
In this section, we present an example to show the performance of our scheme. For simplicity, all channels are assumed to be binary erasure MACs, defined as
\begin{equation}
Y=
\begin{cases}
X_1+X_2 & \text{ w.p. } 1-\epsilon\\
? & \text{ w.p. } \epsilon
\end{cases},
\end{equation}
and the channel inputs are assumed to be uniformly distributed. The erasure probabilities of Bob's observed channel $W_1(Y_1|X_1X_2)$, Alice's observed channel $W_2(Y_2|X_1X_2)$ and Eve's observed channel $W_e(Y_1|X_1X_2)$ are $\epsilon_1=0.2$, $\epsilon_2=0.3$ and $\epsilon_e=0.4$ respectively. For the auxiliary random variables in (\ref{SSR}), we consider $C_1=X_1$ and $C_2=X_2$. In this case, the achievable rate region of the eavesdropper MAC is
\begin{equation}
\begin{cases}
&0\leq R_{e1} \leq R_U=1-\epsilon_e=0.6\\
&0\leq R_{e2} \leq R_V=1-\epsilon_e=0.6\\
&R_{e1}+R_{e2} \leq C_{sum}=1.5(1-\epsilon_e)=0.9
\end{cases}.
\end{equation}
Since each user knows its own transmitted message, two legitimate channels can be simplified to two BECs with erasure probabilities $\epsilon_1$ and $\epsilon_2$ respectively. Then the secrecy rate region of this two-way wiretap channel is
\begin{equation*}
	\begin{cases}
		&0\leq R_1 \leq (1-\epsilon_1)-(C_{sum}-R_V)=0.5\\
		&0\leq R_2 \leq (1-\epsilon_2)-(C_{sum}-R_U)=0.4\\
		&R_1+R_2 \leq R_s\triangleq (1-\epsilon_1)+(1-\epsilon_2)-C_{sum}=0.6
	\end{cases}.
\end{equation*}

For BECs, Bhattacharyya parameters $Z(W_{1,N}^{(i)})$ and $Z(W_{2,N}^{(i)})$ can be easily calculated by \cite{arikan2009channel}
\begin{equation*}
Z(W_{j,N}^{(2i-1)})=2Z(W_{j,N/2}^{(i)})-Z(W_{j,N/2}^{(i)})^2,~~
Z(W_{j,N}^{(2i)})=Z(W_{j,N/2}^{(i)})^2,
\end{equation*}
with $Z(W_{j,1}^{(1)})=\epsilon_j$ for $j=1,2$, where $Z(W_{j,N}^{(i)})$ is short for $Z(U_j^i|Y_j^{1:N},U_j^{1:i-1})$.

For the eavesdropper MAC, we take a corner point of its achievable rate region, $(0.6,0.3)$, as an example. The secrecy rate pair in this case is $(0.5,0.1)$.The Bhattacharyya parameters in the corner point case can be easily calculated since the MAC can be split into two single-user channels. For  points between two corner points, a Monte Carlo approach can be used \cite{arikan2009channel}. Let $W_{e1}(Y_e|X_1)$ be the channel from Alice to Eve when $X_2$ is known to Eve, and $W_{e2}(Y_e|X_2)$ the channel from Bob to Eve when $X_1$ is treated as noise. It is easy to verify that Bhattacharyya parameters for $W_{e1}$ and $W_{e2}$ are the same as those for $W_{e1}(Y_e|X_1)$.
\begin{figure}[tb]
	\centering
	\includegraphics[width=10cm]{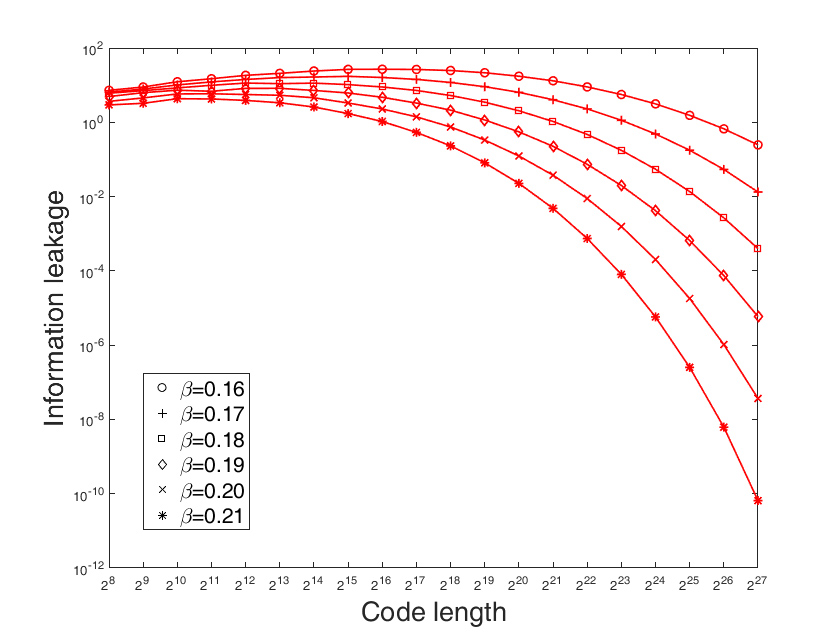}
	\caption{Upper bound for the information leakage.} \label{fig:secrecy}
\end{figure}

\begin{figure}[tb]
	\centering
	\includegraphics[width=10cm]{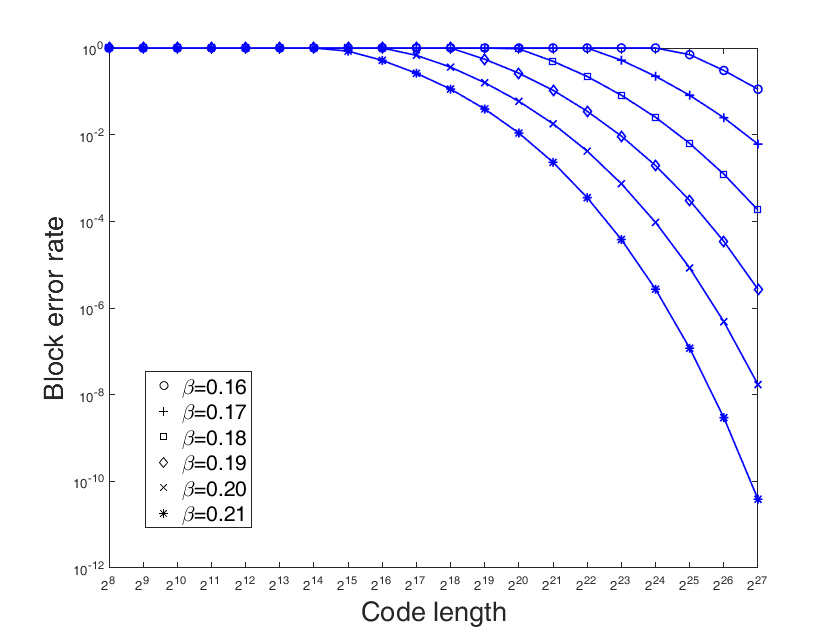}
	\caption{Upper bound for the block error rate.} \label{fig:reliability}
\end{figure}

\begin{figure}[tb]
	\centering
	\includegraphics[width=10cm]{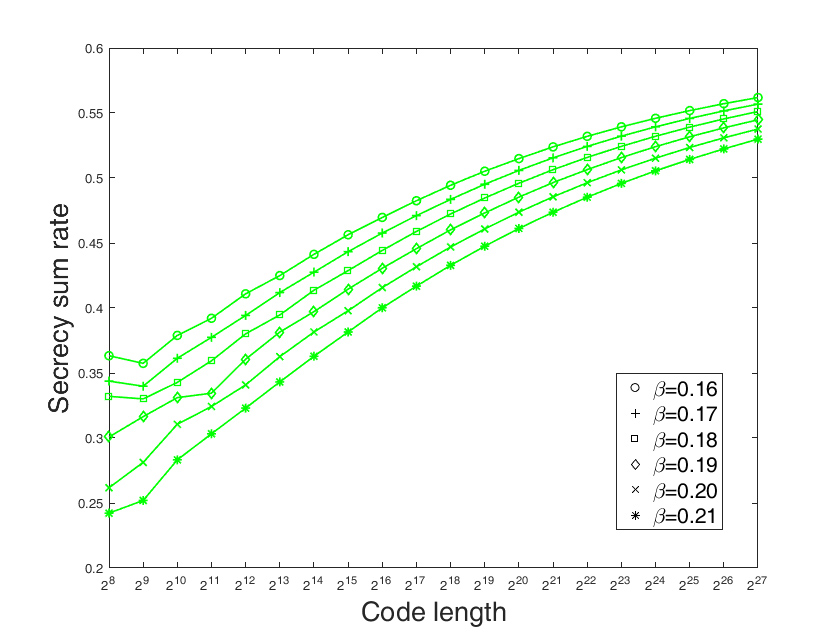}
	\caption{The secrecy sum rate of our proposed scheme.} \label{fig:sumrate}
\end{figure}

The upper bound for information leakage can be deduced from (\ref{lemma2-1}) and (\ref{strong-2}) that
\begin{equation}
\label{strong-z}
I(\mathbf{M}^m;\mathbf{Y}_e^m|\mathbf{F}^m)\leq m\sum_{i\in \mathcal{I}\cup\mathcal{F}}{\big{(}1-Z(S^{i}|Y_e^{1:N}, S^{1:i-1})^2\big{)}},
\end{equation}
where $\mathcal{I}$ and $\mathcal{F}$ respectively are the information bit set and the frozen bit set with respect to $S^{1:2N}$. Once we are able to obtain the Bhattacharyya parameters for all synthesized channels, the upper bound for block error rate can be evaluated by (\ref{ber}), and that for the information leakage can be estimated using (\ref{strong-z}). Since the number of transmission blocks $m$ is only a multiplier when estimating the information leakage and block error rate, we choose $m=1$ without loss of generality.

The parameter $\delta_N=2^{-N^\beta}$ ($0<\beta<0.5$) in the definitions of polarized sets plays an important role in the code design. The larger $\beta$ is, the smaller $L(N)$ and $P_e(N)$ will be for a given $N$. However, the secret sum rate will be smaller correspondingly. In this example, we choose several $\beta$ values and compare their differences in performance. Fig. \ref{fig:secrecy}, \ref{fig:reliability} and \ref{fig:sumrate} respectively show the information leakage upper bound, block error rate upper bound and secrecy sum rate of our proposed scheme. The result meets our theoretical analysis that, as the code length increases, the information leakage and block error rate vanish while the secrecy sum rate approaches $R_s=0.6$.

\section{Conclusion and Discussion}
\label{S:VII}
  In this paper, we have introduced polar codes into the problem of coded cooperative jamming in the two-way wiretap channel, and proposed a strong secrecy-achieving scheme for the general case and a weak secrecy-achieving scheme for the degraded case. The monotone chain rule expansion based MAC polarization method is used to allocate different secrecy rates between two users. How to determine the exact code construction for an arbitrary rate pair in an arbitrary two-way wiretap channel might be a difficult task in our proposed scheme, since it is not clear whether the existing efficient construction methods for point-to-point polar codes can be readily applied to the monotone chain rule expansion based MAC polar codes yet. A remedy for this problem is to use the other type of MAC polarization method mentioned in Section \ref{S:Intro}, for which efficient constructing methods do exist (e.g.,\cite{Tal2012MAC,Pereg2017MAC}). Although there will be some loss in achievable rate region with this approach, the sum capacity is still achievable.

  Although we only considered binary-input case in this paper, the result can be readily extended to arbitrary prime alphabet cases with the result of \cite{sasoglu2011polar}. 

\bibliographystyle{IEEEtran}
\bibliography{Polar_2Way}

\end{document}